\theoremstyle{plain}
\newtheorem{observation}[theorem]{Observation}
\newtheorem{claim}[theorem]{Claim}
\newtheorem*{relemma}{Lemma}
\newtheorem*{retheorem}{Theorem}
\title{Sparsification Upper and Lower Bounds for Graphs Problems and Not-All-Equal SAT\footnote{This work was supported by NWO Veni grant ``Frontiers in Parameterized Preprocessing'' and NWO Gravity grant ``Networks''.}}
\titlerunning{Sparsification Upper and Lower Bounds for Graphs Problems and Not-All-Equal SAT} 
\author{Bart M.P. Jansen}
\author{Astrid Pieterse}
\affil{Eindhoven University of Technology\\
  P.\,O.\, Box 513, Eindhoven, The Netherlands\\
  \texttt{b.m.p.jansen@tue.nl}, \texttt{a.pieterse@tue.nl}}
\authorrunning{B.\,M.\,P.\, Jansen and A. Pieterse} 
\subjclass{F.2.2 Nonnumerical Algorithms and Problems, G.2.2 Graph Theory}
\keywords{sparsification, graph coloring, Hamiltonian cycle, satisfiability}
\newcommand{\todo}[1][]{%
  \ifx/#1/%
    \textcolor{red}{TODO!}%
  \else%
    \textcolor{red}{todo: #1}%
  \fi%
}
\newcommand{\vect}[1]{\mathbf{#1}}
\let\oldcite=\cite
\renewcommand{\cite}[2][TODO]{%
    \oldcite{#2}%
}
\newcommand{\ID}{\ensuremath{\mathrm{\textsc{{id}}}}}
\newcommand*{\etal}{\textit{et al.\@}}
\newcommand{\problem}[1]{\textsc{#1}}
\newcommand{\defproblem}[3]{
 \vspace{1mm}
\noindent\fbox{
 \begin{minipage}{0.96\textwidth}
 \begin{tabular*}{\textwidth}{@{\extracolsep{\fill}}lr} #1 &  \\ \end{tabular*}
 {\bf{Input:}} #2 \\
 {\bf{Question:}} #3
 \end{minipage}
 }
 \vspace{1mm}
}
\newcommand{\q}{t'}
\newcommand{\FourColoring}{\textsc{$4$-Coloring}\xspace}
\newenvironment{mydef}{\begin{definition}}{\end{definition}}
\newenvironment{mythm}{\begin{theorem}}{\end{theorem}}
\newenvironment{mylem}{\begin{lemma}}{\end{lemma}}
\begin{document}

\maketitle
\newcommand{\Oh}{\mathcal{O}}
\newcommand{\containment}{\ensuremath{\mathsf{NP \subseteq coNP/poly}}\xspace}
\newcommand{\ncontainment}{\ensuremath{\mathsf{NP \nsubseteq coNP/poly}}\xspace}
\newcommand{\yes}{\emph{yes}\xspace}
\newcommand{\no}{\emph{no}\xspace}
\newcommand{\true}{\emph{true}\xspace}
\newcommand{\false}{\emph{false}\xspace}

\begin{abstract}
We present several sparsification lower and upper bounds for classic problems in graph theory and logic. For the problems \textsc{$4$-Coloring}, \textsc{(Directed) Hamiltonian Cycle}, and \textsc{(Connected) Dominating Set}, we prove that there is no polynomial-time algorithm that reduces any $n$-vertex input to an equivalent instance, of an arbitrary problem, with bitsize~$\Oh(n^{2-\varepsilon})$ for~$\varepsilon > 0$, unless \containment and the polynomial-time hierarchy collapses. These results imply that existing linear-vertex kernels for \textsc{$k$-Nonblocker} and \textsc{$k$-Max Leaf Spanning Tree} (the parametric duals of \textsc{(Connected) Dominating Set}) cannot be improved to have~$\Oh(k^{2-\varepsilon})$ edges, unless \containment. We also present a positive result and exhibit a non-trivial sparsification algorithm for \textsc{$d$-Not-All-Equal-SAT}. We give an algorithm that reduces an $n$-variable input with clauses of size at most~$d$ to an equivalent input with~$\Oh(n^{d-1})$ clauses, for any fixed~$d$. Our algorithm is based on a linear-algebraic proof of Lov\'{a}sz that bounds the number of hyperedges in critically $3$-chromatic $d$-uniform $n$-vertex hypergraphs by~$\binom{n}{d-1}$. We show that our kernel is tight under the assumption that \ncontainment.
\end{abstract}

\section{Introduction}
\label{sec:introduction}
\subparagraph*{Background.}
Sparsification refers to the method of reducing an object such as a graph or CNF-formula to an equivalent object that is less dense, that is, an object in which the ratio of edges to vertices (or clauses to variables) is smaller. The notion is fruitful in theoretical~\cite{ImpagliazzoPZ01} and practical (cf.~\cite{EppsteinGIN97}) settings when working with (hyper)graphs and formulas. The theory of kernelization, originating from the field of parameterized complexity theory, can be used to analyze the limits of polynomial-time sparsification. Using tools developed in the last five years, it has become possible to address questions such as: ``Is there a polynomial-time algorithm that reduces an $n$-vertex instance of my favorite graph problem to an equivalent instance with a subquadratic number of edges?''

The impetus for this line of analysis was given by an influential paper by Dell and van Melkebeek~\cite{DellM14} (conference version in 2010). One of their main results states that if there is an~$\varepsilon > 0$ and a polynomial-time algorithm that reduces any $n$-vertex instance of \textsc{Vertex Cover} to an equivalent instance, of an arbitrary problem, that can be encoded in~$\Oh(n^{2-\varepsilon})$ bits, then \containment and the polynomial-time hierarchy collapses. Since any nontrivial input~$(G,k)$ of \textsc{Vertex Cover} has~$k \leq n = |V(G)|$, their result implies that the number of edges in the $2k$-vertex kernel for \textsc{$k$-Vertex Cover}~\cite{NemhauserT75} cannot be improved to~$\Oh(k^{2-\varepsilon})$ unless \containment.

Using related techniques, Dell and van Melkebeek also proved important lower bounds for \textsc{$d$-cnf-sat} problems: testing the satisfiability of a propositional formula in CNF form, where each clause has at most~$d$ literals. They proved that for every fixed integer~$d \geq 3$, the existence of a polynomial-time algorithm that reduces any $n$-variable instance of \textsc{$d$-cnf-sat} to an equivalent instance, of an arbitrary problem, with~$\Oh(n^{d-\varepsilon})$ bits, for some~$\varepsilon > 0$ implies \containment. Their lower bound is tight: there are~$\Oh(n^d)$ possible clauses of size~$d$ over~$n$ variables, allowing an instance to be represented by a vector of~$\Oh(n^d)$ bits that specifies for each clause whether or not it is present.

\subparagraph*{Our results.}
We continue this line of investigation and analyze sparsification for several classic problems in graph theory and logic. We obtain several sparsification lower bounds that imply that the quadratic number of edges in existing linear-vertex kernels is likely to be unavoidable. When it comes to problems from logic, we give the---to the best of our knowledge---first example of a problem that \emph{does} admit nontrivial sparsification: \textsc{$d$-Not-All-Equal-SAT}. We also provide a matching lower bound.

The first problem we consider is \textsc{$4$-Coloring}, which asks whether the input graph has a proper vertex coloring with~$4$ colors. Using several new gadgets, we give a cross-composition~\cite{BodlaenderJK14} to show that the problem has no compression of size~$\Oh(n^{2-\varepsilon})$ unless \containment. To obtain the lower bound, we give a polynomial-time construction that embeds the logical \textsc{or} of a series of~$t$ size-$n$ inputs of an NP-hard problem into a graph~$G'$ with~$\Oh(\sqrt{t} \cdot n^{\Oh(1)})$ vertices, such that~$G'$ has a proper $4$-coloring if and only if there is a \yes-instance among the inputs. The main structure of the reduction follows the approach of Dell and Marx~\cite{DellM12}: we create a table with two rows and~$\Oh(\sqrt{t})$ columns and~$\Oh(n^{\Oh(1)})$ vertices in each cell. For each way of picking one cell from each row, we aim to embed one instance into the edge set between the corresponding groups of vertices. When the NP-hard starting problem is chosen such that the~$t$ inputs each decompose into two induced subgraphs with a simple structure, one can create the vertex groups and their connections such that for each pair of cells~$(i,j)$, the subgraph they induce represents the~$i \cdot \sqrt{t} + j$-th input. If there is a \yes-instance among the inputs, this leads to a pair of cells that can be properly colored in a structured way. The challenging part of the reduction is to ensure that the edges in the graph corresponding to \no-inputs do not give conflicts when extending this partial coloring to the entire graph.

The next problem we attack is \textsc{Hamiltonian Cycle}. We rule out compressions of size~$\Oh(n^{2-\varepsilon})$ for the directed and undirected variant of the problem, assuming \ncontainment. The construction is inspired by kernelization lower bounds for \textsc{Directed Hamiltonian Cycle} parameterized by the vertex-deletion distance to a directed graph whose underlying undirected graph is a path~\cite{BodlaenderJK13b}.

By combining gadgets from kernelization lower bounds for two different parameterizations of \textsc{Red Blue Dominating Set}, we prove that there is no compression of size~$\Oh(n^{2-\varepsilon})$ for \textsc{Dominating Set} unless \containment. The same construction rules out subquadratic compressions for \textsc{Connected Dominating Set}. These lower bounds have implications for the kernelization complexity of the parametric duals \textsc{Nonblocker} and \textsc{Max Leaf Spanning Tree} of \textsc{(Connected) Dominating Set}. For both \textsc{Nonblocker} and \textsc{Max Leaf} there are kernels with~$\Oh(k)$ vertices~\cite{Wiedermann2006nonblocker,Estivill-CastroFLR05} that have~$\Theta(k^2)$ edges. Our lower bounds imply that the number of edges in these kernels cannot be improved to~$\Oh(k^{2-\varepsilon})$, unless \containment.

The final family of problems we consider is \textsc{$d$-Not-All-Equal-SAT} for fixed~$d \geq 4$. The input consists of a formula in CNF form with at most~$d$ literals per clause. The question is whether there is an assignment to the variables such that each clause contains both a variable that evaluates to \true and one that evaluates to \false. There is a simple linear-parameter transformation from \textsc{$d$-cnf-sat} to \textsc{$(d+1)$-nae-sat} that consists of adding one variable that occurs as a positive literal in all clauses. By the results of Dell and van Melkebeek discussed above, this implies that \textsc{$d$-nae-sat} does not admit compressions of size~$\Oh(n^{d-1-\varepsilon})$ unless \containment. We prove the surprising result that this lower bound is tight! A linear-algebraic result due to Lov\'asz~\cite{lovasz}, concerning the size of critically $3$-chromatic $d$-uniform hypergraphs, can be used to give a kernel for \textsc{$d$-nae-sat} with~$\Oh(n^{d-1})$ clauses for every fixed~$d$. The kernel is obtained by computing the basis of an associated matrix and removing the clauses that can be expressed as a linear combination of the basis clauses.

\subparagraph*{Related work.} Dell and Marx introduced the table structure for compression lower bounds~\cite{DellM12} in their study of compression for packing problems. Hermelin and Wu~\cite{HermelinW12} analyzed similar problems. 
Other papers about polynomial kernelization and sparsification lower bounds include~\cite{CyganGH13} and~\cite{Jansen15}. 

\section{Preliminaries}

\newcommand{\N}{\ensuremath{\mathbb{N}}\xspace}
\newcommand{\Q}{\ensuremath{\mathcal{Q}}\xspace}
\newcommand{\eqvr}[0]{\ensuremath{\mathcal{R}}\xspace}
\newcommand{\OR}[0]{\ensuremath{\mathop{\mathrm{\textsc{or}}}}\xspace}

A parameterized problem~$\Q$ is a subset of~$\Sigma^* \times \mathbb{N}$, where~$\Sigma$ is a finite alphabet. Let~$\Q,\Q'\subseteq\Sigma^*\times\N$ be parameterized problems and let~$h\colon\N\to\N$ be a computable function. A \emph{generalized kernel for~$\Q$ into~$\Q'$ of size~$h(k)$} is an algorithm that, on input~$(x,k)\in\Sigma^*\times\N$, takes time polynomial in~$|x|+k$ and outputs an instance~$(x',k')$ such that: 
\begin{enumerate}
\item $|x'|$ and~$k'$ are bounded by~$h(k)$, and 
\item $(x',k') \in \Q'$ if and only if~$(x,k) \in \Q$. 
\end{enumerate}
The algorithm is a \emph{kernel} for~$\Q$ if~$\Q'=\Q$. It is a \emph{polynomial (generalized) kernel} if~$h(k)$ is a polynomial.

Since a polynomial-time reduction to an equivalent sparse instance yields a generalized kernel, we will use the concept of generalized kernels in the remainder of this paper to prove the non-existence of such sparsification algorithms. We employ the cross-composition framework by Bodlaender \emph{et al.}~\cite{BodlaenderJK14}, which builds on earlier work by several authors~\cite{BodlaenderDFH09,DellM14,FortnowS11}.

\begin{definition}[Polynomial equivalence relation] \label{definition:eqvr}
An equivalence relation~\eqvr on~$\Sigma^*$ is called a \emph{polynomial equivalence relation} if the following conditions hold.
 \begin{enumerate}
 \item There is an algorithm that, given two strings~$x,y \in \Sigma^*$, decides whether~$x$ and~$y$ belong to the same equivalence class in time polynomial in~$|x| + |y|$.
 \item For any finite set~$S \subseteq \Sigma^*$ the equivalence relation~$\eqvr$ partitions the elements of~$S$ into a number of classes that is polynomially bounded in the size of the largest element of~$S$.
 \end{enumerate}
\end{definition}

\begin{definition}[Cross-composition]\label{definition:crosscomposition}
Let~$L\subseteq\Sigma^*$ be a language, let~$\eqvr$ be a polynomial equivalence relation on~$\Sigma^*$, let~$\Q\subseteq\Sigma^*\times\N$ be a parameterized problem, and let~$f \colon \N \to \N$ be a function. An \emph{\OR-cross-com\-position of~$L$ into~$\Q$} (with respect to \eqvr) \emph{of cost~$f(t)$} is an algorithm that, given~$t$ instances~$x_1, x_2, \ldots, x_t \in \Sigma^*$ of~$L$ belonging to the same equivalence class of~$\eqvr$, takes time polynomial in~$\sum _{i=1}^t |x_i|$ and outputs an instance~$(y,k) \in \Sigma^* \times \mathbb{N}$ such that: 
\begin{enumerate}
 \item the parameter~$k$ is bounded by $\Oh(f(t)\cdot(\max_i|x_i|)^c)$, where~$c$ is some constant independent of~$t$, and 
 \item $(y,k) \in \Q$ if and only if there is an~$i \in [t]$ such that~$x_i \in L$.\label{property:OR}
 \end{enumerate}
\end{definition}

\begin{theorem}[{\cite[Theorem 6]{BodlaenderJK14}}] \label{thm:cross_composition_LB}
Let~$L\subseteq\Sigma^*$ be a language, let~$\Q\subseteq\Sigma^*\times\N$ be a parameterized problem, and let~$d,\varepsilon$ be positive reals. If~$L$ is NP-hard under Karp reductions, has an \OR-cross-composition into~$\Q$ with cost~$f(t)=t^{1/d+o(1)}$, where~$t$ denotes the number of instances, and~$\Q$ has a polynomial (generalized) kernelization with size bound~$\Oh(k^{d-\varepsilon})$, then \containment.
\end{theorem}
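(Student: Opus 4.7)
The plan is to combine the hypothesized \OR-cross-composition and generalized kernelization into a \emph{weak \OR-distillation} of $L$, and then invoke the distillation lower bound of Fortnow and Santhanam as refined by Dell and van Melkebeek~\cite{DellM14,FortnowS11}.

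Suppose for contradiction that the stated cross-composition into $\Q$ with cost $f(t)=t^{1/d+o(1)}$ exists and that $\Q$ admits a generalized kernelization into some $\Q'$ of size $\Oh(k^{d-\varepsilon})$. Given $t$ instances $x_1,\ldots,x_t$ of $L$ of maximum length $n$, I would first use property~(2) of the polynomial equivalence relation to partition them into at most $p(n)$ classes for some polynomial $p$, and then apply the cross-composition separately to each class. This produces, per class $j$, an instance $(y_j,k_j)\in\Q$ with $k_j = \Oh(f(t)\cdot n^c) = \Oh(t^{1/d+o(1)}\cdot n^c)$ for some constant $c$. Feeding each $(y_j,k_j)$ into the generalized kernel yields an instance $(y_j',k_j')\in\Q'$ of bitsize $\Oh(k_j^{d-\varepsilon}) = \Oh(t^{1-\varepsilon/d+o(1)}\cdot n^{c(d-\varepsilon)})$.

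Next I would concatenate these at most $p(n)$ kernels, tagged with their originating class, into a single object $z$ of total bitsize $\Oh(t^{1-\varepsilon/d+o(1)}\cdot n^{c'})$ for some constant $c'$ depending on $c$, $d$, and $\varepsilon$. By construction, some $x_i\in L$ if and only if at least one tagged block is a \yes-instance of $\Q'$, and membership of each block in $\Q'$ is polynomial-time decidable. Choosing $t$ to be a sufficiently large polynomial in $n$, for instance $t := \lceil n^{2c'd/\varepsilon}\rceil$, renders the $n^{c'}$ factor negligible and drives the total size down to $o(t)$ bits. This is precisely a weak \OR-distillation of $L$ into an NP language, and since $L$ is NP-hard under Karp reductions, the distillation lower bound in the form of Dell and van Melkebeek~\cite{DellM14,FortnowS11} implies \containment.

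The main obstacle is the parameter-balancing step: one must pick $t$ large enough that the fixed polynomial factor in $n$ coming from the cross-composition and kernel is swamped by the gain factor $t^{-\varepsilon/d}$, yet keep $t$ polynomially bounded in $n$ so that the composed procedure still runs in time polynomial in $\sum_i|x_i|\leq tn$. A smaller subtlety is that the $p(n)$ equivalence classes of $\eqvr$ must all be handled uniformly, but since $p$ is a polynomial their overhead is absorbed by the chosen $t$. With these ingredients in place, the theorem follows by a direct appeal to the classical OR-distillation lower bound.
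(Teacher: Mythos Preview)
The paper does not prove this theorem at all; it is quoted verbatim from~\cite{BodlaenderJK14} and used as a black box. Your sketch is the standard argument underlying that reference (compose, kernelize each equivalence class, concatenate, and appeal to the Fortnow--Santhanam/Dell--van~Melkebeek distillation bound), and it is essentially correct.

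One small inaccuracy worth fixing: you assert that ``membership of each block in~$\Q'$ is polynomial-time decidable'' and that the resulting target is ``an NP language.'' Neither is guaranteed---the definition of generalized kernelization places no complexity restriction on~$\Q'$---but neither is needed. The distillation lower bound of~\cite{FortnowS11,DellM14} applies for an \emph{arbitrary} target language; the advice in the $\mathsf{coNP/poly}$ argument is just a short list of output strings, and verification only compares the computed output against that list. So drop the decidability claim and the argument goes through unchanged. Your handling of the ``choose~$t$'' step is also slightly informal: in the underlying covering argument one fixes the batch size $t=t(n)$ as you do, and the lower bound only needs the compression to succeed for that specific~$t(n)$, so this is fine once stated precisely.
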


For~$r \in \N$ we will refer to an \OR-cross-composition of cost~$f(t) = t^{1/r} \log (t)$ as a \emph{degree-$r$ cross-composition}. By Theorem~\ref{definition:crosscomposition}, a degree-$r$ cross-composition can be used to rule out generalized kernels of size~$\Oh(k^{r - \varepsilon})$. We frequently use the fact that a polynomial-time linear-parameter transformation from problem~$\Q$ to~$\Q'$ implies that any generalized kernelization lower bound for~$\Q$, also holds for~$\Q'$ (cf.~\cite{BodlaenderJK14,BodlaenderTY11}). Let $[r]$ be defined as $[r]~:=~\{x~\in~\mathbb{N}~\mid~1~\leq~x~\leq~r\}$. 

\section{\texorpdfstring{$4$-Coloring}{4-Coloring}}
\label{sec:4-coloring}
In this section we analyze the \FourColoring problem, which asks whether it is possible to assign each vertex of the input graph one out of 4 possible colors, such that there is no edge whose endpoints share the same color. We show that \FourColoring does not have a generalized kernel of size $\Oh(n^{2-\varepsilon})$, by giving a degree-$2$ cross-composition from a tailor-made problem that will be introduced below. Before giving the construction, we first present and analyze some of the gadgets that will be needed.

\begin{figure}[t]
\centering
\subfloat[Sub figure 1 list of figures text][Treegadget with no red leaf.]{
\includegraphics{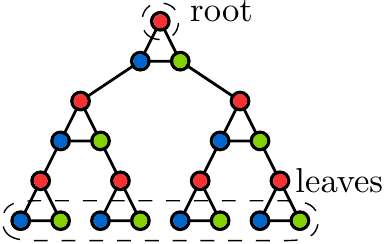}
\label{fig:tree_gadget_lemma_1}}
\quad
\subfloat[Sub figure 2 list of figures text][Treegadget where one of the leaves is red.]{
\includegraphics{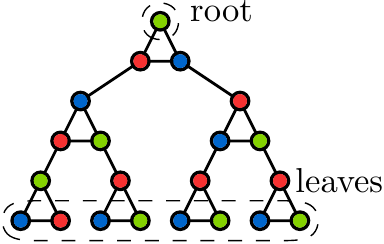}
\label{fig:tree_gadget_lemma_2}}
\quad
\subfloat[Sub figure 2 list of figures text][Triangular gadget.]{
\includegraphics{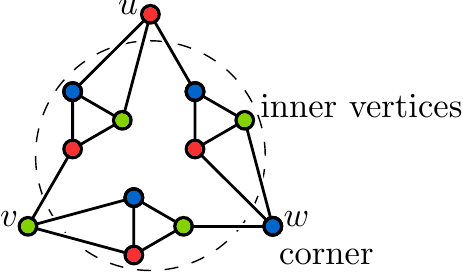}
\label{fig:triangular_gadget}
}
\caption{Used gadgets with example colorings.}
\label{fig:tree_gadget_lemmas}
\end{figure}

\begin{mydef}
A \emph{treegadget} is the graph obtained from a complete binary tree by replacing each vertex $v$ by a triangle on vertices $r_v$, $x_v$ and $y_v$. Let $r_v$ be connected to the parent of $v$ and let $x_v$ and $y_v$ be connected to the left and right subtree of $v$. An example of a treegadget with $8$ leaves is shown in Figure \ref{fig:tree_gadget_lemmas}. If vertex $v$ is the root of the tree, then $r_v$ is named the \emph{root} of the treegadget. If $v$ does not have a left subtree, then $x_v$ is a \emph{leaf} of this gadget, similarly, if $v$ does not have a right subtree then we refer to $y_v$ as a leaf of the gadget. Let the \emph{height} of a treegadget be equal to the height of its corresponding binary tree.
\end{mydef}

It is easy to see that a treegadget is $3$-colorable. The important property of this gadget is that if there is a color that does not appear on any leaf in a proper $3$-coloring, then this must be the color of the root. See Figure \ref{fig:tree_gadget_lemma_1} for an illustration.

\begin{mylem}
\label{lem:must_color_root_c}
Let $T$ be a treegadget with root $r$ and let $c \colon V(T) \to \{1,2,3\}$ be a proper $3$-coloring of $T$. If  $k \in \{1,2,3\}$ such that $c(v) \neq k$ for every leaf $v$ of T, then $c(r) = k$.
\end{mylem}
\begin{proof}
This will be proven using induction on the structure of a treegadget. For a single triangle, the result is obvious. Suppose we are given a treegadget of height $h$ and that the statement holds for all treegadgets of smaller height. Consider the top triangle $r, x, y$ where $r$ is the root. Then, by the induction hypothesis, the roots of the left and right subtree are colored using $k$. Hence $x$ and $y$ do not use color $k$. Since~$x,y,r$ is a triangle,~$r$ has color $k$ in the $3$-coloring.
\end{proof}

The following lemma will be used in the correctness proof of the cross-composition to argue that the existence of a single \yes-input is sufficient for $4$-colorability of the entire graph.

\begin{mylem}
\label{lem:can_color_root_c}
Let $T$ be a treegadget with leaves $L\subseteq V(T)$ and root $r$. Any $3$-coloring $c' \colon L \to \{1,2,3\}$ that is proper on $T[L]$ can be extended to a proper $3$-coloring of $T$. If there is a leaf $v\in L$ such that $c'(v)=i$, then such an extension exists with $c(r) \neq i$.
\end{mylem}
\begin{proof}
We will prove this by induction on the height of the treegadget. For a single triangle, the result is obvious. Suppose the lemma is true for all treegadgets up to height $h-1$ and we are given a treegadget of height $h$ with root triangle $r,x,y$ and with coloring of the leaves $c'$. Let one of the leaves be colored using $i$. Without loss of generality assume this leaf is in the left subtree, which is connected to $x$. By the induction hypothesis, we can extend the coloring restricted to the leaves of the left subtree to a proper $3$-coloring of the left subtree such that $c(r_1) \neq i$. We assign color $i$ to $x$.
Since~$c'$ restricted to the leaves in the right subtree is a proper $3$-coloring of the leaves in the right subtree, by induction we can extend that coloring to a proper $3$-coloring of the right subtree. Suppose the root of this subtree gets color $j \in \{1,2,3\}$. We now color $y$ with a color $k \in \{1,2,3\}\setminus \{i,j\}$, which must exist. Finally, choose $c(r) \in \{1,2,3\} \setminus \{i,k\}$. By definition, the vertices $r$, $y$, and $x$ are now assigned a different color. Both $x$ and $y$ have a different color than the root of their corresponding subtree, thereby $c$ is a proper coloring. We obtain that the defined coloring $c$ is a proper coloring extending $c'$ with $c(r)\neq i$.
\end{proof}

\begin{mydef}
A \emph{triangular gadget} is a graph on $12$ vertices depicted in Figure \ref{fig:triangular_gadget}. Vertices $u,v$, and $w$ are the \emph{corners} of the gadget, all other vertices are referred to as \emph{inner vertices}.
\end{mydef}

It is easy to see that a triangular gadget is always $3$-colorable in such a way that every corner gets a different color.  Moreover, we make the following observation.

\begin{observation}
\label{lem:triangular_gadget}
Let $G$ be a triangular gadget with corners $u$,$v$ and $w$ and let $c \colon V(G) \to \{1,2,3\}$ be a proper $3$-coloring of $G$. Then $c(v) \neq c(u) \neq c(w) \neq c(v)$. Furthermore, every partial coloring that assigns distinct colors to the three corners of a triangular gadget can be extended to a proper $3$-coloring of the entire gadget.
\end{observation}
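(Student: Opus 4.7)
The plan is to verify both parts of the observation by direct analysis of the $12$-vertex graph fixed in Figure~\ref{fig:triangular_gadget}. Since the gadget has no parameters, both statements reduce to finite combinatorial checks once the edge set is specified.

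For the first part, I would show that in any proper $3$-coloring the three corners receive pairwise distinct colors by proving, for each of the pairs $\{u,v\}$, $\{u,w\}$, and $\{v,w\}$, that identifying the two corner-vertices produces a graph that is not $3$-colorable. Concretely, I would suppose that two corners, say $u$ and $v$, received a common color and then propagate coloring constraints through the inner vertices on the path of triangles between them; the forced alternation of colors on these inner triangles ultimately creates two adjacent inner vertices that must share a color, giving a contradiction. Equivalently, one can point out a $K_4$-subgraph or an odd wheel in the graph obtained after identification. By the three-fold symmetry of the gadget under permutations of its corners, handling one pair suffices and the other two pairs follow by relabeling.

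For the second part, by symmetry under permutations of the three colors it is enough to exhibit a single proper $3$-coloring of the gadget that assigns the colors $1$, $2$, $3$ to $u$, $v$, $w$ respectively. Such a coloring is already depicted in Figure~\ref{fig:triangular_gadget}, so all that remains is to check that no edge is monochromatic, which is a routine inspection of the inner adjacencies. The remaining five proper extensions of distinct corner colors then arise by applying the corresponding permutation of $\{1,2,3\}$ to the exhibited coloring.

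The main obstacle is the absence of any structural shortcut: both claims rely on a finite case analysis of the fixed $12$-vertex gadget, and the correctness argument is essentially a careful walk through the inner triangles to rule out monochromatic corner pairs. However, once the edges are laid out the analysis is entirely mechanical, and the symmetry of the gadget between its corners keeps the presentation concise.
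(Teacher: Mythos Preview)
The paper gives no proof of this observation at all; it is stated as self-evident, preceded only by the remark ``It is easy to see that a triangular gadget is always $3$-colorable in such a way that every corner gets a different color.'' Your plan --- a direct finite check on the fixed $12$-vertex graph, exploiting the corner symmetry and the explicit coloring in Figure~\ref{fig:triangular_gadget} --- is exactly the sort of verification the authors are leaving to the reader, and it is correct in outline. The only point to be careful about is that you \emph{assume} a three-fold symmetry among the corners; this is a property of the specific graph in the figure rather than something guaranteed by the definition, so in a fully written-out proof you would either verify that automorphism from the picture or simply carry out the constraint propagation for all three corner pairs.
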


Having presented all the gadgets we use in our construction, we now define the source problem for the cross-composition. It is a variant of the problem that was used to prove kernel lower bounds for \textsc{Chromatic Number} parameterized by vertex cover~\cite{BodlaenderJK14}.

\defproblem{\problem{$2$-$3$-Coloring with Triangle Split Decomposition}}
{A graph $G$ with a partition of its vertex set into $X \cup Y$ such that $G[X]$ is an edgeless graph and $G[Y]$ is a disjoint union of triangles.}
{Is there a proper $3$-coloring $c:V(G) \to \{1,2,3\}$ of $G$, such that $c(x) \in \{1,2\}$ for all $x \in X$? We will refer to such a coloring as a \emph{2-3-coloring} of $G$.}

\begin{mylem}
\label{lem:NP_complete_coloring}
\problem{$2$-$3$-Coloring with Triangle Split Decomposition} is NP-complete.
\end{mylem}
\begin{proof}
It is easy to verify the problem is in NP. We will show that it is NP-hard by giving a reduction from \problem{$3$-nae-sat}, which is known to be NP-complete \cite[{[LO3] p. 259}]{NPComplete}.
Suppose we are given formula $F = C_1 \wedge C_2 \wedge \ldots \wedge C_m$ over set of variables $U$. Construct graph $G$ in the following way. For every variable $x \in U$, construct a gadget as depicted in Figure \ref{fig:NP_complete_variable}. For every clause $C_i$, construct a gadget as depicted in Figure \ref{fig:NP_complete_gadget}. Let $C_i = (\ell_1 \vee \ell_2 \vee \ell_3)$ for $i \in [m]$, connect vertex $\ell_j$ for $j \in \{1,2,3\}$ to vertex $v_{j}$ in gadget $C_i$ in $G$.

\begin{figure}[t]
\centering
\subfloat[Sub figure 1 list of figures text][Gadget for a variable]{
\includegraphics{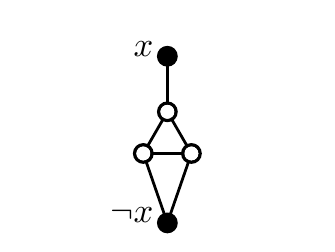}
\label{fig:NP_complete_variable}}
\qquad
\subfloat[Sub figure 2 list of figures text][Gadget for a clause]{
\includegraphics{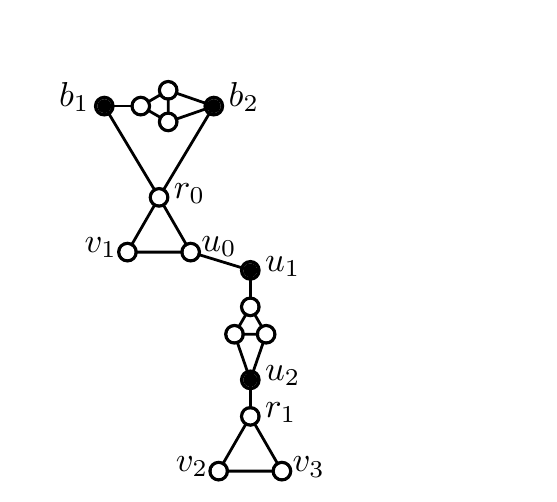}
\label{fig:NP_complete_gadget}}
\caption{The gadgets constructed for the clauses and variables of $F$.}
\label{fig:NP_complete_gadgets}
\end{figure}

It is easy to verify that $G$ has a triangle split decomposition. In Figure \ref{fig:NP_complete_gadgets}, triangles are shown with white vertices and the independent set is shown in black.

Suppose $G$ is $2$-$3$-colorable with color function $c:V(G) \rightarrow\{1,2,3\}$ and let $c(v) \in \{1,2\}$ for all $v$ in the independent set. Note that in each of the pairs $\{x, \neg x\}$, $\{b_1,b_2\}$, and $\{u_1, u_2\}$ the two vertices have distinct colors in any proper $2$-$3$-coloring of $G$. To satisfy $F$, let $x = true$ if and only if $c(x) = 2$. To show that this results in a satisfying assignment, consider any clause $C_i$ for $i \in [m]$. Note that $c(x) = 2 \Leftrightarrow c(\neg x) = 1$.
Since $c(b_1) \neq c(b_2)$ and $c(b_1),c(b_2) \in \{1,2\}$ we obtain $c(r_0) = 3$. Therefore, $v_1$ and $u_0$ are colored using colors $1$ and~$2$.

Suppose $c(v_1) = 1$. Thereby, $c(\ell_1) = 2$, implying the first literal of $C_i$ is set to $true$.  By $c(u_0) = 2$, we know $c(u_1) = 1$ and $c(u_2) = 2$. Thereby, $c(r_1) \neq 2$, so either $c(v_2) = 2$ or $c(v_3) = 2$. If $c(v_2) = 2$, then  $c(\ell_2) = 1$ which implies that literal $\ell_2$ is $false$ in $C_i$. Similarly, if $c(v_3) = 2$, then  $c(\ell_3) = 1$ which implies that literal $\ell_3$ is $false$ in $C_i$. In both cases it follows that clause $C_i$ is NAE-satisfied.

When $c(v_1) = 2$, we can use the same argument with the colors $1$ and $2$ swapped, to show that $\ell_1$ is $false$ in $C_i$ and $\ell_2$ or $\ell_3$ is $true$, which implies that $C_i$ is NAE-satisfied.

Suppose $F$ is a \yes-instance, with satisfying truth assignment $S$. Define color function $c:V(G)\rightarrow \{1,2,3\}$ as $c(x) := 1$ and $c(\neg x) := 2$ if $x$ is set to false in $S$, define $c(x) := 2$ and $c(\neg x) := 1$ otherwise. Color the remainder of the variable gadgets consistently. We now need to show how to color the clause gadgets. Consider any clause $C_i = (\ell_1\vee \ell_2\vee \ell_3)$. At least one of the literals is true and one is set to false, by symmetry we only consider four cases. The corresponding colorings are depicted in Figure \ref{fig:NP_complete_gadget_cases}, where red corresponds to $1$, green corresponds to $2$ and blue corresponds to color $3$. It is easy to verify that this leads to a proper $3$-coloring that only uses colors $1$ and $2$ on vertices in the independent set.
\begin{figure}[t]
\begin{center}
\includegraphics{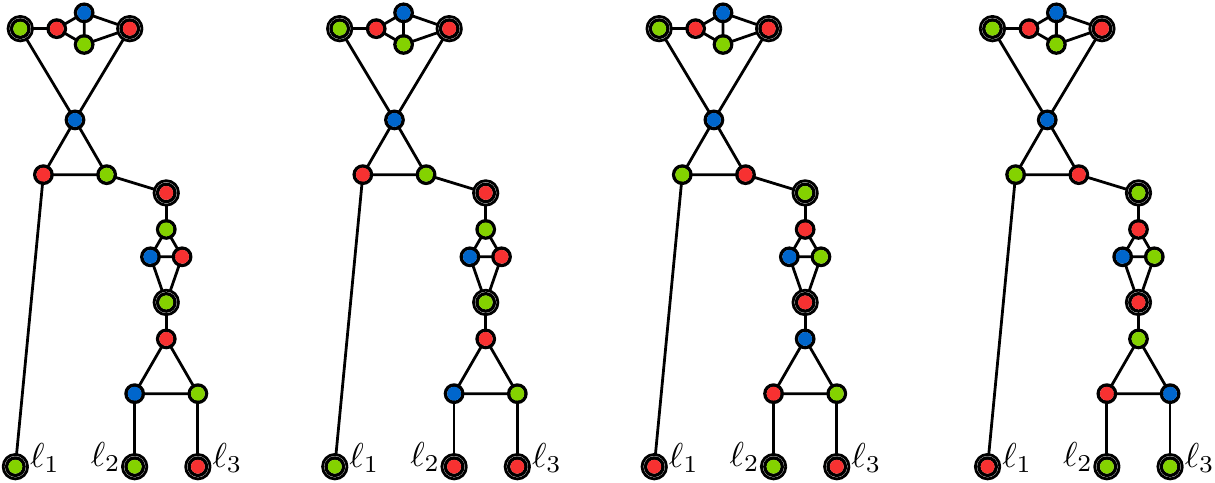}
\end{center}
\caption{Valid colorings of a clause gadget, depending on the coloring of the literals $\ell_1,\ldots,\ell_3$. Note that if the roles of $\ell_2$ and $\ell_3$ are exactly reversed, you can just exchange colors between their parents to get a proper coloring for that situation.}
\label{fig:NP_complete_gadget_cases}
\end{figure}
\end{proof}

\begin{mythm} \label{thm:fourcoloring:lowerbound}
\FourColoring parameterized by the number of vertices $n$ does not have a generalized kernel of size $\Oh(n^{2-\varepsilon})$ for any $\varepsilon > 0$, unless \containment.
\end{mythm}
\begin{proof}
By Theorem~\ref{thm:cross_composition_LB} and Lemma~\ref{lem:NP_complete_coloring} it suffices to give a degree-2 cross-composition from the $2$-$3$-coloring problem defined above into \FourColoring parameterized by the number of vertices. For ease of presentation, we will actually give a cross-composition into the \textsc{$4$-List Coloring} problem, whose input consists of a graph~$G$ and a list function that assigns every vertex~$v \in V(G)$ a list~$L(v) \subseteq [4]$ of allowed colors. The question is whether there is a proper coloring of the graph in which every vertex is assigned a color from its list. The \textsc{$4$-List Coloring} reduces to the ordinary \FourColoring by a simple transformation that adds a $4$-clique to enforce the color lists, which will prove the theorem. For now, we focus on giving a cross-composition into \textsc{$4$-List Coloring}.

We start by defining a polynomial equivalence relation on inputs of \problem{$2$-$3$-Coloring with Triangle Split Decomposition}. Let two instances of \problem{$2$-$3$-Coloring with Triangle Split Decomposition} be equivalent under equivalence relation $\eqvr$ when they have the same number of triangles and the independent sets also have the same size. It is easy to see that $\eqvr$ is a polynomial equivalence relation. By duplicating one of the inputs, we can ensure that the number of inputs to the cross-composition is an even power of two; this does not change the value of \textsc{or}, and increases the total input size by at most a factor four. We will therefore assume that the input consists of~$t$ instances of \problem{$2$-$3$-Coloring with Triangle Split Decomposition} such that~$t = 2^{2i}$ for some integer~$i$, implying that~$\sqrt{t}$ and~$\log \sqrt{t}$ are integers. Let $\q := \sqrt{t}$. Enumerate the instances as $X_{i,j}$ for $1\leq i,j \leq \q$. Each input $X_{i,j}$ consists of a graph $G_{i,j}$ and a partition of its vertex set into sets $U$ and $V$, such that $U$ is an independent set of size $m$ and $G_{i,j}[V]$ consists of $n$ vertex-disjoint triangles. Enumerate the vertices in $U$ and $V$  as $u_1,\ldots, u_m$ and $v_1, \ldots, v_{3n}$, such that vertices $v_{3\ell-2}, v_{3\ell-1}$ and $v_{3\ell}$ form a triangle, for $\ell \in  [n]$. We will create an instance $G'$ of the \problem{$4$-List-Coloring} problem, which consists of a graph~$G'$ and a list function~$L$ that assigns each vertex a subset of the color palette~$\{x,y,z,a\}$. 
Refer to Figure \ref{fig:4-coloring} for a sketch of $G'$.

\begin{figure}[t]
\begin{center}
\includegraphics{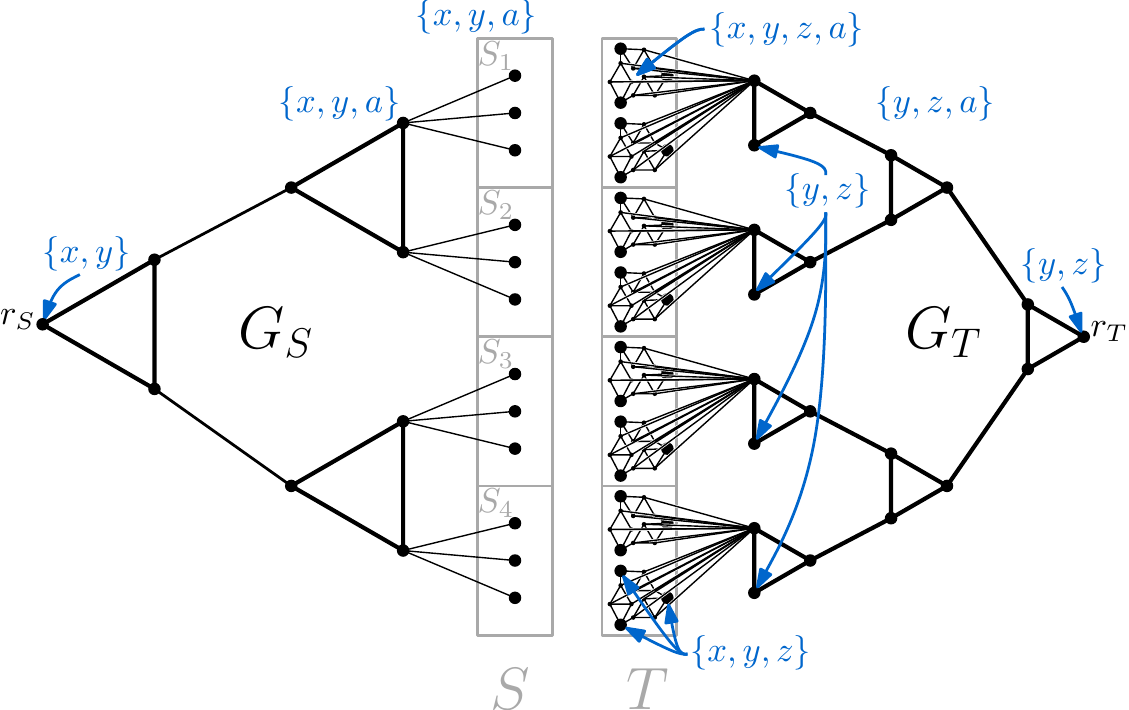}
\end{center}
\caption{The graph $G'$ for $\q = 4$, $m = 3$ and $n=2$. Edges between vertices in $S$ and $T$ are left out for simplicity.}
\label{fig:4-coloring}
\end{figure}

\begin{enumerate}
	\item Initialize~$G'$ as the graph containing $\q$ sets of $m$ vertices each, called $S_i$ for $i \in [\q]$. Label the vertices in each of these sets as $s^i_\ell$ for $i \in [\q]$, $\ell \in [m]$ and let $L(s^i_\ell) := \{x,y,a\}$.
	\item Add $\q$ sets of $n$ triangular gadgets each, labeled $T_j$ for $j \in [\q]$. Label the corner vertices in $T_j$ as $t_\ell^j$ for $\ell \in [3n]$, such that vertices $t_{3\ell-2}^j, t_{3\ell-1}^j$ and $t_{3\ell}^j$ are the corner vertices of one of the gadgets for $\ell \in [n]$. Let $L(t_\ell^j) := \{x,y,z\}$ and for any inner vertex $v$ of a triangular gadget, let $L(v) := \{x,y,z,a\}$.
	\item Connect vertex $s^i_k$ to vertex $t^j_\ell$ if in graph $G_{i,j}$ vertex $u_k$ is connected to $v_\ell$, for $k \in [m]$ and $\ell \in [3n]$. By this construction, the subgraph of $G'$ induced by $S_i \cup T_j$ is isomorphic to the graph obtained from~$G_{i,j}$ by replacing each triangle with a triangular gadget.
	\item Add a treegadget $G_S$ with $\q$ leaves to~$G'$ and enumerate these leaves as $1,\ldots, \q$; recall that~$\q$ is a power of two. Connect the $i$'th leaf of $G_S$ to every vertex in $S_i$. Let the root of $G_S$ be $r_S$ and define $L(r_S) := \{x,y\}$. For every other vertex $v$ in $G_S$ let $L(v) := \{x,y,a\}$.
	\item Add a treegadget $G_T$ with $2\q$ leaves to~$G'$ and enumerate these leaves as $1, \ldots, 2\q$. For $j \in [\q]$, connect every inner vertex of a triangular gadget in group $T_j$ to leaf number $2j-1$ of $G_T$. For every leaf $v$ with an even index let $L(v) := \{y,z\}$ and let the root $r_T$ have list $L(r_T) := \{y,z\}$. For every other vertex $v$ of gadget $G_T$ let $L(v) := \{y,z,a\}$.
\end{enumerate}

\begin{claim}
The graph $G'$ is $4$-list-colorable $\Leftrightarrow$ some input instance $X_{i^*j^*}$ is $2$-$3$-colorable.
\end{claim}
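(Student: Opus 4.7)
The plan is to prove both directions of the equivalence.

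\textbf{Direction ``$\Rightarrow$'' (coloring yields witness).} Suppose $G'$ has a valid $4$-list-coloring $c^\ast$. I would locate ``chosen'' indices $i^\ast$ and $j^\ast$ by inspecting the two treegadgets. Since the root of $G_S$ has list $\{x,y\}$, color $a$ is absent from the root; applying the contrapositive of Lemma~\ref{lem:must_color_root_c} to $G_S$, some leaf of $G_S$ must carry color $a$, and I take $i^\ast$ to be any such leaf's index. An analogous argument on $G_T$, combined with the fact that every even-indexed leaf has list $\{y,z\}$, forces color $a$ to appear on an odd leaf $2j^\ast-1$, defining $j^\ast$. Because the vertices of $S_{i^\ast}$ are adjacent to leaf $i^\ast$ of $G_S$ and have list $\{x,y,a\}$, they are all colored from $\{x,y\}$; the corners of each triangular gadget in $T_{j^\ast}$ are pairwise distinct and lie in $\{x,y,z\}$ by Observation~\ref{lem:triangular_gadget}; and the edges between $S_{i^\ast}$ and the corners of $T_{j^\ast}$ mirror exactly the $U$-$V$ edges of $G_{i^\ast,j^\ast}$. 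After relabelling $x\to 1$, $y\to 2$, $z\to 3$, the restricted coloring is thus a $2$-$3$-coloring of $G_{i^\ast,j^\ast}$.

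\textbf{Direction ``$\Leftarrow$'' (witness yields coloring).} Given a $2$-$3$-coloring $c$ of $G_{i^\ast,j^\ast}$, I would build a $4$-list-coloring of $G'$ in four stages. First, transfer $c$ to the chosen cell: set $s^{i^\ast}_k$ to the color in $\{x,y\}$ corresponding to $c(u_k)$, set corner $t^{j^\ast}_\ell$ to the color in $\{x,y,z\}$ corresponding to $c(v_\ell)$, and use Observation~\ref{lem:triangular_gadget} to extend each gadget in $T_{j^\ast}$ to its inner vertices. Every edge inside $S_{i^\ast}\cup T_{j^\ast}$ is now proper because it mirrors an edge of $G_{i^\ast,j^\ast}$. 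Second, color the treegadgets: place $a$ on leaf $i^\ast$ of $G_S$ and colors from $\{x,y\}$ on every other leaf so that sibling leaves differ, then invoke Lemma~\ref{lem:can_color_root_c} to extend to all of $G_S$ with a root color different from $a$, matching the root's list $\{x,y\}$; do the analogous thing for $G_T$ with leaf $2j^\ast-1$ set to $a$. Third, color each $S_i$ with $i\neq i^\ast$ entirely with the wildcard $a$: this respects the leaf $i$ of $G_S$ (not $a$ by construction) and conflicts with no corner of any $T_j$ (whose lists exclude $a$).

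The fourth stage---the unused columns---is where I expect the real technical obstacle. For each $j\neq j^\ast$ and each triangular gadget in $T_j$, the three corners must receive three pairwise distinct colors from $\{x,y,z\}$ while avoiding the colors of their $S_{i^\ast}$-neighbours (no other $S_i$ imposes a constraint since it is uniformly $a$). My plan is to observe that $z$ is always free at every corner (because $S_{i^\ast}$ uses only $\{x,y\}$), assign $z$ to a cleverly chosen corner of each gadget, and place $\{x,y\}$ on the remaining two corners while respecting the forbidden sets $F\subseteq\{x,y\}$ induced by the $G_{i^\ast,j}$-edges; a short case analysis on the triple of forbidden sets of the three corners of a gadget then shows a valid assignment exists, after which I fill the inner vertices via Observation~\ref{lem:triangular_gadget}. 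The hard case is when all three corners of some gadget share forbidden set $\{x,y\}$---meaning every vertex of the corresponding triangle of $G_{i^\ast,j}$ has both a ``$1$''-neighbour and a ``$2$''-neighbour in $U$ under $c$. Ruling this out is the main obstacle, and I would handle it by exploiting the residual freedom in $c$ (colors of $U$-vertices whose $V$-neighbours are unconstrained or all of color $3$ may be flipped) together with the wildcard flexibility offered by color $a$ on the inner vertices of the gadgets, adjusting the partial coloring locally until every triangular gadget in each $T_j$ can be consistently completed.
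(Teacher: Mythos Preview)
Your forward direction is essentially the paper's, with one imprecision worth noting: to invoke Observation~\ref{lem:triangular_gadget} on a gadget in $T_{j^*}$ you need the \emph{entire} gadget to be $3$-coloured, not just its corners (whose lists already exclude $a$). This is precisely what leaf $2j^*-1$ of $G_T$ provides: it is adjacent to the \emph{inner} vertices of the gadgets in $T_{j^*}$, so once that leaf carries colour $a$, the inner vertices are confined to $\{x,y,z\}$ as well, and the observation applies.

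The backward direction, however, has a real gap in your fourth stage. You insist on giving the three corners of each gadget in $T_j$ ($j\neq j^*$) three \emph{distinct} colours from $\{x,y,z\}$, so that Observation~\ref{lem:triangular_gadget} can then fill in the interior. Your ``hard case''---all three corners of some gadget having both $x$ and $y$ forbidden by $S_{i^*}$-neighbours---does occur in general, and your proposed repair (flipping colours of $U$-vertices in $c$) does not work: the colouring $c$ is constrained by the edges of $G_{i^*,j^*}$, and even where a flip is legal there is no reason a flip that fixes one gadget in $T_j$ does not simultaneously break another gadget in some $T_{j'}$. The ``wildcard flexibility of $a$ on inner vertices'' does not help either, since the obstruction is at the corners, whose lists exclude $a$.

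The paper avoids the whole difficulty with a much simpler idea you are missing: the three corners of a triangular gadget are \emph{not adjacent} to one another (this is exactly why Observation~\ref{lem:triangular_gadget} is a nontrivial statement rather than an immediate one). Hence for every $j\neq j^*$ one may colour \emph{all} corners in $T_j$ with the single colour $z$. This conflicts with no vertex of $S$, since $S_{i^*}$ uses only $\{x,y\}$ and every other $S_i$ is uniformly $a$. The inner vertices of each such gadget are then properly coloured from the three colours $\{x,y,a\}$, which is also compatible with leaf $2j-1$ of $G_T$ (coloured $z$). No case analysis and no modification of $c$ are needed.
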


\begin{proof}
$(\Rightarrow)$
Suppose we are given a $4$-list coloring $c$ for $G'$. By definition, $c(r_S) \neq a$. From Lemma \ref{lem:must_color_root_c} it follows that there is a leaf $v$ of $G_S$ such that $c(v) = a$. This leaf is connected to all vertices in some $S_{i^*}$, which implies that none of the vertices in $S_{i^*}$ are colored using $a$. Therefore all vertices in $S_{i^*}$ are colored using $x$ and $y$.
Similarly the gadget $G_T$ has at least one leaf $v$ such that $c(v) = a$, note that this must be a leaf with an odd index. Therefore there exists $T_{j^*}$ where all vertices are colored using $x$,$y$ or $z$. Thereby in $S_{i^*} \cup T_{j^*}$ only three colors are used, such that $S_{i^*}$ is colored using only two colors. Using Observation \ref{lem:triangular_gadget} and the fact that~$G'[S_{i^*} \cup T_{j^*}]$ is isomorphic to the graph obtained from~$G_{i^*,j^*}$ by replacing triangles by triangular gadgets, we conclude that $X_{i^*j^*}$ has a proper $2$-$3$-coloring.

$(\Leftarrow)$ Suppose $c \colon V(G_{i^*,j^*}) \to \{x,y,z\}$ is a proper $2$-$3$-coloring for $X_{i^*,j^*}$. We will construct a $4$-list coloring $c' \colon V(G') \to \{x,y,z,a\}$ for $G'$.
For $u_k$, $k\in [m]$ in instance $X_{i^*,j^*}$ let $c'(s^{i^*}_k) := c(u_k)$ and for $v_\ell$ for $\ell \in [3n]$ let $c'(t^{j^*}_\ell) := c(v_\ell)$. Let $c'(s^i_\ell) := a$ for $i \neq i^*$ and $\ell \in [n]$, furthermore let $c'(t^j_\ell) := z$ for $j \neq j^*$ and $\ell \in [3m]$. For triangular gadgets in $T_{j^*}$ the coloring $c'$ defines all corners to have distinct colors; by Observation \ref{lem:triangular_gadget} we can color the inner vertices consistently using $\{x,y,z\}$. For $T_j$ with $j\in[\q]$ and $j \neq j^*$, the corners of triangular gadgets have color $z$ and we can now consistently color the inner vertices using $\{x,y,a\}$.

The leaf of gadget $G_S$ that is connected to $S_{i^*}$ can be colored using $a$. Every other leaf can use both $x$ and $y$, so we can properly $3$-color the leaves such that one leaf has color $a$. From Lemma \ref{lem:can_color_root_c} it follows that we can consistently $3$-color $G_S$ such that the root $r_S$ does not receive color $a$, as required by $L(r_S)$. Similarly, in triangular gadgets in $T_{j^*}$  the inner vertices do not have color $a$. As such, leaf $2j^*-1$ of $G_T$ can be colored using $a$ and we color leaf $2j^*$ with $y$. For $j \in [\q]$ with $j \neq j^*$ color leaf $2j - 1$ with $z$ and leaf $2j$ using $y$. Now the leaves of $G_T$ are properly $3$-colored and one is colored $a$. It follows from Observation~\ref{lem:triangular_gadget} that we can color $G_T$ such that the root is not colored $a$. This completes the $4$-list coloring of $G'$.
\end{proof}

The claim shows that the construction serves as a cross-composition into \problem{$4$-List Coloring}. To prove the theorem, we add four new vertices to simulate the list function. Add a clique on $4$ vertices $\{x,y,z,a\}$. If for any vertex $v$ in $G'$, some color is not contained in $L(v)$, connect $v$ to the vertex corresponding to this color. As proper colorings of the resulting graph correspond to proper list colorings of~$G'$, the resulting graph is $4$-colorable if and only if there is a \yes-instance among the inputs. It remains to bound the parameter of the problem, i.e., the number of vertices. Observe that a treegadget has at least as many leaves as its corresponding binary tree, therefore the graph $G'$ has at most $12m\q+n\q + 6\q + 12\q + 4 = \Oh(\q \cdot (m+n)) = \Oh(\sqrt{t}\max|X_{i,j}|)$ vertices. Theorem~\ref{thm:fourcoloring:lowerbound} now follows from Theorem \ref{thm:cross_composition_LB} and Lemma~\ref{lem:NP_complete_coloring}.
\end{proof} 

\section{Hamiltonian cycle}
\label{sec:hamiltonian_cycle}
In this section we prove a sparsification lower bound for \textsc{Hamiltonian Cycle} and its directed variant by giving a degree-$2$ cross-composition. The starting problem is \problem{Hamiltonian $s-t$ path on bipartite graphs}.

\defproblem{\problem{Hamiltonian $s-t$ path on bipartite graphs}}{An undirected bipartite graph $G$ with partite sets~$A$ and~$B$ such that $|B| = n = |A| + 1$, together with two distinguished vertices $b_1$ and $b_n$ that have degree $1$.}{Does $G$ have a Hamiltonian path from $b_1$ to $b_n$?}

It is known that Hamiltonian path is NP-complete on bipartite graphs \cite[{[GT39] (2)}]{NPComplete} and it is easy to see that is remains NP-complete when fixing a degree $1$ start and endpoint.

\begin{mythm}
\problem{(Directed) Hamiltonian Cycle} parameterized by the number of vertices $n$ does not have a generalized kernel of size $\Oh(n^{2 - \varepsilon})$ for any $\varepsilon > 0$, unless \containment.
\end{mythm}
\begin{proof}
By a suitable choice of polynomial equivalence relation, and by padding the number of inputs, it suffices to give a cross-composition from the $s-t$ problem on bipartite graphs when the input consists of~$t$ instances~$X_{i,j}$ for~$i, j \in [\sqrt t]$ (i.e.,~$\sqrt t$ is an integer), such that each instance~$X_{i,j}$ encodes a bipartite graph~$G_{i,j}$ with partite sets~$A^*_{i,j}$ and~$B^*_{i,j}$ with $|A^*_{i,j}| = m$ and $|B^*_{i,j}| = n = m+1$, for some~$m \in \mathbb{N}$. For each instance, label all elements in $A^*_{i,j}$ as $a^*_1, \ldots, a^*_m$ and all elements in $B^*_{i,j}$ as $b^*_1, \ldots, b^*_n$ such that $b^*_1$ and $b^*_n$ have degree $1$.

The construction makes extensive use of the path gadget depicted in Figure \ref{fig:domino_gadget}. Observe that if $G'$ contains a path gadget as an induced subgraph, while the remainder of the graph only connects to its terminals $\textsc{in}^0$ and $\textsc{in}^1$, then any Hamiltonian cycle in $G'$ traverses the path gadget in one of the two ways depicted in Figure \ref{fig:domino_gadget}. We create an instance $G'$ of \problem{Directed Hamiltonian Cycle} that acts as the logical \textsc{or} of the inputs.

\begin{enumerate}
\item \label{hc:step:A} First of all construct $\sqrt{t}$ groups of $m$ path gadgets each. Refer to these groups as $A_i$, for $i \in [\sqrt{t}]$, and label the gadgets within group $A_i$ as $a_1^i, \ldots, a_m^i$. Let the union of all created sets $A_i$ be named $A$.
 \label{hc:step:B} Similarly, construct $\sqrt{t}$ groups of $n$ path gadgets each. Refer to these groups as $B_j$, for $j \in [\sqrt{t}]$, and label the gadgets within group $B_j$ as $b_1^j, \ldots, b_n^j$.  Let $B$ be the union of all $B_j$ for $j \in [\sqrt{t}]$.
\item \label{hc:step:connectAB} For every input instance $X_{i,j}$, for each edge $\{a^*_k, b^*_\ell\}$ in $X_{i,j}$ with $k \in [m]$, $\ell \in [n]$,
add an arc from $\textsc{in}^0$ of $a_k^i$ to $\textsc{in}^1$ of $b_\ell^j$
and an arc from $\textsc{in}^0$ of $b_\ell^j$ to $\textsc{in}^1$ of $a_k^i$.
\end{enumerate}
If some $X_{i,j}$ has a Hamiltonian  $s-t$ path, it can be mimicked by the combination of $A_i$ and $B_j$, where for each vertex in $X_{i,j}$ we traverse its path gadget in $G'$, following Path $1$. The following construction steps are needed to extend such a path to a Hamiltonian cycle in $G'$. 

\begin{enumerate}[resume]
\item \label{hc:step:Apath} Add an arc from the $\textsc{in}^1$ terminal of $a_\ell^i$  to the $\textsc{in}^0$ terminal of $a_{\ell+1}^i$ for all $\ell \in[m-1]$ and all $i \in [\sqrt{t}]$.
\label{hc:step:Bpath} Similarly add an arc from the $\textsc{in}^1$ terminal of $b_\ell^i$ to the $\textsc{in}^0$ of $b_{\ell+1}^i$ for all $\ell \in [n-1]$ and all $i \in [\sqrt{t}]$.
\item Add a vertex $\textsc{start}$ and a vertex $\textsc{end}$ and the arc $(\textsc{end}, \textsc{start})$.
\item Let $r := \sqrt{t}-1$, add $2r$ tuples of vertices, $x_i, y_i$ for $i \in [2r]$ and connect $\textsc{start}$ to $x_1$.
 Furthermore, add the arcs $(y_i, x_{i+1})$ for $i \in [2r-1]$.
\item \label{hc:step:x_to_A} For $i \leq r$ we add arcs from $x_i$ to the $\textsc{in}^0$ terminal of the gadgets $a_1^j, j\in[\sqrt{t}]$. Furthermore we add an arc from $\textsc{in}^1$ of $a_m^j$ to $y_i$ for all $j\in [\sqrt{t}]$ and $i \in [r]$.
When $i > r$ add arcs from $x_i$ to the $\textsc{in}^0$ terminal of $b_1^j$ for $j \in [\sqrt{t}]$ and connect $\textsc{in}^1$ of $b_n^j$ to $y_i$.
\item Add a vertex $\textsc{next}$ and the arc $(y_{2r}, \textsc{next})$ and an arc from $\textsc{next}$ to the $\textsc{in}^1$ terminal of all gadgets $b_1^j$ for $j \in [\sqrt{t}]$.
\item Furthermore, add arcs from $\textsc{in}^0$ of all gadgets $b_n^j$ to $\textsc{end}$ for $j \in [\sqrt{t}]$. So for each $B_j$, exactly one vertex has an outgoing arc to $\textsc{end}$ and one has an incoming arc from $\textsc{next}$.
\end{enumerate}

\begin{figure}[t]
\centering
\subfloat[Domino gadget][A path gadget.]{
\raisebox{2cm}{\includegraphics{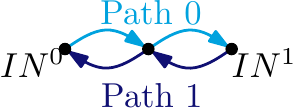}}
\label{fig:domino_gadget}}
\subfloat[Instance for directed Hamiltonian cycle][The general structure of the created graph, when given $4$ inputs with $n = 3$ and $m = 4$.]{
\includegraphics{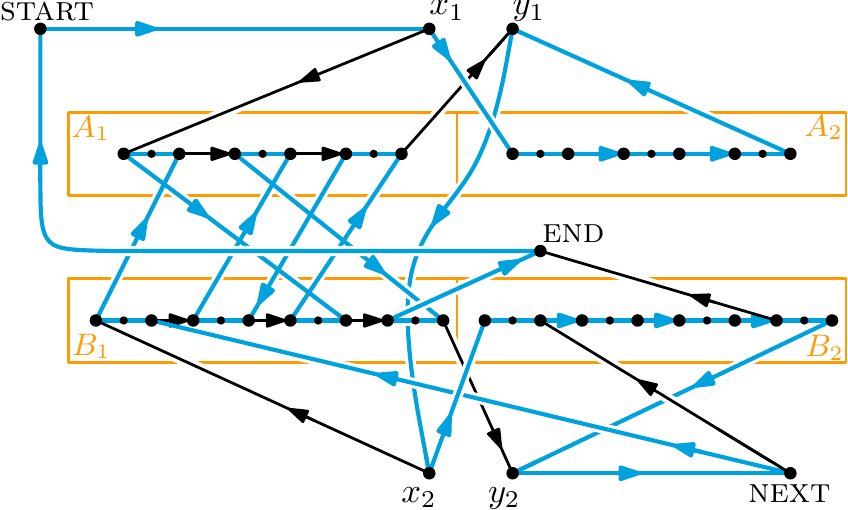}
\label{fig:Hamiltonian_decision}}
\caption{Illustrations for the lower bound for \textsc{Hamiltonian Cycle}.}
\label{fig:Hamiltonian_figures}
\end{figure}

This completes the construction of $G'$. A sketch of $G'$ is shown in Figure \ref{fig:Hamiltonian_decision}.
In order to prove that the created graph $G'$ acts as a logical \textsc{or} of the given input instances, we first establish a number of auxiliary lemmas.

\begin{mylem}
\label{lem:path_0_or_1}
Any Hamiltonian cycle in $G'$ traverses any path gadget in $G'$ via directed Path 0 or Path 1, as shown in Figure \ref{fig:domino_gadget}.
\end{mylem}
\begin{proof}
\label{plem:path_0_or_1}
Any Hamiltonian cycle in $G'$ should visit the center vertex of the path gadget. Since $\textsc{in}^0$ and $\textsc{in}^1$ are its only two neighbors in $G'$, the only option is to visit them consecutively,  Path $0$ and Path $1$ are the only two options to do this.
\end{proof}

\begin{mylem}
\label{lem:traverse_sequence_i}
When any Hamiltonian cycle in $G'$ enters path gadget $a_1^i$ at $\textsc{in}_0$ for some~$i \in [\sqrt{t}]$, the cycle then visits the gadgets $a_2^i, a_3^i,\ldots,a_m^i$ in order without visiting other vertices in between. Similarly, if any Hamiltonian cycle in $G'$ enters path gadget $b_1^j$ at $\textsc{in}_0$, the cycle then visits the gadgets $b_2^j, b_3^j,\ldots,b_n^j$ in order without visiting other vertices in between.
\end{mylem}
\begin{proof}
\phantomsection
\label{plem:traverse_sequence_i}


Consider a Hamiltonian cycle in~$G'$ that enters path gadget~$a_1^i$ at $\textsc{in}_0$. By Lemma~\ref{lem:path_0_or_1} the cycle follows Path $0$ and continues to the $\textsc{in}^1$ terminal of the path gadget. Since that terminal has only one out-neighbor outside the gadget, which leads to the $\textsc{in}_0$ terminal of~$a_2^i$, it follows that the cycle continues to that path gadget. As the adjacency structure around the other path gadgets is similar, the lemma follows by repeating this argument. The proof when entering group $B_j$ at the vertex $\textsc{in}_0$ of $b_1^j$ is equivalent.
\end{proof}

\begin{mylem}
\label{lem:all_but_one}
Let $C$ be a directed Hamiltonian cycle in $G'$, such that its first arc is $(\textsc{start}, x_1)$. There are indices $i^*,j^* \in[\sqrt{t}]$ such that subpath $C_{x_1,y_{2r}}$ of the cycle between $x_1$ and $y_{2r}$ contains exactly the vertices
 \[\overline{A_{i^*}} \cup \overline{B_{j^*}} \cup \{x_i,y_i \mid i \in [2r]\}\]
where $\overline{A_{i^*}}$ contains all vertices of all gadgets in $A_i$ for $i \neq i^*$, and similarly $\overline{B_{j^*}}$ contains all vertices of all gadgets in $B_j$ for $j \neq j^*$.
\end{mylem}
\begin{proof}
We will first show that when the cycle reaches any $x_i$ for $i \in [r]$, it traverses exactly one group $A_\ell$ with $\ell \in [r+1]$ and continues to $y_j$ and  $x_{j+1}$ for some $j \in [r]$, without visiting other vertices in between. Similarly, when the cycle reaches any $x_i$ for $r < i \leq 2r$, it traverses exactly one group $B_\ell$ with $\ell \in [r+1]$ and continues to $y_j$ for some $r < j \leq 2r$. For $j < 2r$, the cycle then continues to $x_{j+1}$, for $j = 2r$ the cycle reached $y_{2r}$, which is the last vertex of this subpath.

By Step \ref{hc:step:x_to_A} in the construction, all outgoing arcs of any $x_i$ for $i \in [r]$ lead to gadgets $a_1^\ell$ for some $\ell \in[\sqrt{t}]$. So for any $x_i$ in the cycle there must be a unique $\ell \in [\sqrt{t}]$ such that the arc from $x_i$ to the $\textsc{in}^0$ terminal of $a_1^{\ell}$ is in $C$. By Lemma \ref{lem:traverse_sequence_i} the cycle visits all vertices in $A_\ell$, and no other vertices, before reaching gadget  $a_m^{\ell}$, which is traversed by Path $0$ to get to $\textsc{in}^1$ of this gadget. The only neighbors of $\textsc{in}^1$ of gadget $a_m^{\ell}$ lying outside this gadget are of type $y_j$ for $j \in [r]$. As such, the cycle must visit some $y_j$ next, and its only outgoing arc goes to $x_{j+1}$.

The proof for $i > r$ is similar.
As such, visiting $x_i$ for $i \in [r]$ results in visiting all vertices of exactly one group in $A$ before continuing via $y_j$ to some $x_{j+1}$ without visiting any vertices in between. Visiting $x_i$ for $r < i \leq 2r$ results in visiting all vertices of exactly one group in $B$ and returning via $y_j$ to either the end of the subpath ($j=2r$) or some $x_{j+1}$.

Every vertex $x_i$ for $i \in [2r]$ must be visited by $C$, it remains to show that it is visited in subpath $C_{x_1, y_{2r}}$. Suppose there exists an $x_i$ for $i \in [2r]$ such that $x_i$ is not visited in the subpath from $x_1$ to $y_{2r}$.  As we have seen above, visiting some $x_i$ results in visiting all vertices in some group in $A$ or $B$, continued by visiting some $y_j$ for $j \in [2r]$. Note that no other vertices are visited in between.
Hereby, $y_j$ is not in subpath $C_{x_1, y_{2r}}$. This implies $j \neq 2r$ and thus the next vertex in the cycle is $x_{j+1}$. So, for $x_i$ not in subpath $C_{x_1, y_{2r}}$, one can find a new vertex $x_{j+1}$ (where $j+1 \neq i$), such that $x_{j+1}$ is also not in subpath $C_{x_1, y_{2r}}$. Note that we can not create a loop, by visiting a vertex $x_i$ seen earlier, as this would not yield a Hamiltonian cycle in $G'$. For example, the vertex \textsc{start} would never be visited. This is however a contradiction since we only have finitely many vertices $x_i$.

Thus in subpath $C_{x_1, y_{2r}}$, exactly $r$ groups of $A$ are visited and exactly $r$ groups of $B$ are visited, and no other vertices than specified. This leaves exactly one group $A_{i^*}$ and one group $B_{j^*}$ unvisited in $C_{x_1,y_{2r}}$.
\end{proof}

In Step \ref{hc:step:x_to_A} we create a selection mechanism that leaves one group in $A$ and one in $B$ unvisited. The following lemma formalizes this idea.

\begin{mylem}
\label{lem:next_to_end}
Let $C$ be a Hamiltonian cycle in $G'$, such that its first arc is $(\textsc{start}, x_1)$. Let $i^*$ and $j^*$ satisfy the conditions of Lemma \ref{lem:all_but_one}. Then cycle $C$ visits $b_1^{j^*}$ before $b_n^{j^*}$. Moreover, the subpath of the cycle $C_{b_1^{j^*},b_n^{j^*}}$ between terminal $\textsc{in}^1$ of $b_1^{j^*}$ and $\textsc{in}^0$ of $b_n^{j^*}$ (inclusive) contains all vertices of the gadgets in $A_{i^*}$ and $B_{j^*}$ and no others.
\end{mylem}
\begin{proof}
\phantomsection
\label{plem:next_to_end}

Vertex $\textsc{next}$ is visited directly after $y_{2r}$, since it is the only out-neighbor of~$y_{2r}$. Furthermore, the arc from $\textsc{next}$ to gadget $b_1^\ell$ must be in the cycle for some $\ell \in [\sqrt{t}]$, since $\textsc{next}$ only has outgoing arcs of this type. By Lemma $\ref{lem:all_but_one}$, all gadgets in all $B_j$ for $j \neq j^*$ are visited in the path from $x_1$ to $y_{2r}$, and thus should not be visited after vertex $\textsc{next}$. Therefore, the arc from $\textsc{next}$ to gadget $b_1^{j^*}$ is in the cycle, which also implies that $b_1^{j^*}$ is visited before $b_n^{j^*}$.

It is easy to see that $(\textsc{end},\textsc{start})$ is the last arc in $C$. By considering the incoming arcs of $\textsc{end}$ it follows that some arc from terminal $\textsc{in}^0$ of $b_n^{\ell}$ to $\textsc{end}$ for $\ell \in [\sqrt{t}]$ is in the cycle. Since the vertices in gadgets $b_n^\ell$ for $\ell \neq j^*$ are already visited in $C_{x_1, y_{2r}}$ by Lemma \ref{lem:all_but_one}, it follows that $(b_n^{j^*} , \textsc{end})$ is in $C$.

By Lemma \ref{lem:all_but_one}, none of the terminals of gadgets in $A_{i^*}$ and $B_{j^*}$ are visited in the subpath $C_{x_1,y_{2r}}$ or equivalently in the subpath $C_{\textsc{start},\textsc{next}}$. Since $C$ is a Hamiltonian cycle these vertices must therefore be visited in $C_{\textsc{next},\textsc{start}}$, which is equivalent to saying that $C_{b_1^{j^*}, b_n^{j^*}}$ must contain all vertices in $A_{i^*}\cup B_{j^*}$. It is easy to see that this subpath cannot contain any other vertices, as all other vertices are  present in $C_{\textsc{start},\textsc{next}}$ or $C_{\textsc{end},\textsc{start}}$.
\end{proof}

Using the lemmas above, we can now prove that $G'$ has a Hamiltonian cycle if and only if one of the instances has a Hamiltonian path.

\begin{mylem}
\label{lem:Hamiltonian_cycle}
Graph $G'$ has a directed Hamiltonian cycle if and only if at least one of the instances $X_{i,j}$ has a Hamiltonian $s-t$-path.
\end{mylem}
\begin{proof}

($\Leftarrow$) Suppose $G'$ has a Hamiltonian cycle $C$. By Lemma \ref{lem:next_to_end} there exist $i^*, j^* \in [\sqrt{t}]$ such that the subpath of $C$ from gadget $b_1^{j^*}$ to $b_n^{j^*}$ visits exactly the gadgets in $A_{i^*} \cup B_{j^*}$. Since gadget $b_1^{j^*}$ is entered at terminal $\textsc{in}^1$, it is easy to see that all gadgets are traversed using Path $1$. We now construct a Hamiltonian path $P$ for instance $X_{i^*,j^*}$. Let $\{a^*_k(i^*,j^*), b^*_\ell(i^*,j^*)\} \in P$ if  the arc from $\textsc{in}^0$ of $a_k^{i^*}$ to $\textsc{in}^1$ of $b_\ell^{j^*}$ is in $C$.
Similarly let $\{b^*_k(i^*,j^*), a^*_\ell(i^*,j^*)\} \in P$ if the arc from $\textsc{in}^0$ of $b_\ell^{j^*}$ to  $\textsc{in}^1$ of $a_k^{i^*}$ is in $C$, where $k \in [m]$ and $\ell \in [n]$.  Using that every gadget is visited exactly once via Path $1$ in $C$, we see that $C$ is a Hamiltonian path.

($\Rightarrow$) Suppose $X_{i^*,j^*}$ has a Hamiltonian $s-t$ path $P$. Then we create a Hamiltonian cycle $C$, for each vertex $a^*_\ell$ from instance $X_{i^*,j^*}$ in $P$ we add Path $1$ in path gadget $a_\ell^{i^*}$ to $C$ and for each vertex $b^*_\ell$  we add Path $1$ in path gadget $b_\ell^{j^*}$ to $C$. Let $P$ be ordered such that $b^*_1$ is its first vertex. Now if $a^*_k$ is followed by $b^*_\ell$ in $P$, the arc from terminal $\textsc{in}^0$ of $a^{i^*}_k$ to $\textsc{in}^1$ of $b^{j^*}_\ell$ is added to $C$. Similarly, if a vertex $b_\ell^*$ is followed by $a_k^*$ in $P$, the arc from terminal $\textsc{in}^0$ of $b^{j^*}_\ell$ to $\textsc{in}^1$ of $a^{i^*}_k$ will be added to $C$. Now the subpath $C_{b_1^{j^*},b_n^{j^*}}$ contains all terminals in all gadgets in $A_{i^*} \cup B_{j^*}$.

From $b_n^{j^*}$ the cycle goes to $\textsc{end}$, then to $\textsc{start}$ and to $x_1$. To visit all groups $A_i$ for $i \neq i^*$ and $B_j$ for $j \neq j^*$, do the following.
\begin{itemize}
\item From $x_i$ where $i < i^*$, the cycle continues to gadgets $a_1^i$, then to $a_2^i, a_3^i, \ldots, a_m^i$ following Path $0$, and continue to $y_i, x_{i+1}$.
\item From $x_i$ where $i^* \leq i \leq r$ it goes to $a_1^{i+1},a_2^{i+1},\ldots, a_n^{i+1}$ and continues with $y_i,x_{i+1}$.
\item Similarly, from $x_i$ where $r \leq i < j^*$, go through gadgets $b_1^i,\ldots, b_n^i$ and continue to $y_i, x_{i+1}$.
\item From $x_i$ where $j^* \leq i \leq 2r$, go to gadgets $b_1^{i+1},\ldots,b_n^{i+1}$ and continue to $y_i$, for $i \neq 2r$ then add the arc $(y_i,x_{i+1})$.
\end{itemize}
From $y_{2r}$, continue to $\textsc{next}$, after which the arc $(\textsc{next}, b_1^{j^*})$ closes the cycle. By definition, no vertex is visited twice, so it remains to check that every vertex of $G'$ is in the cycle. For vertices $\textsc{start}, \textsc{next}, \textsc{end}$ and all vertices $x_i, y_i, z_i$ this is obvious. All vertices in $A_i$ and $B_j$ where $i \neq i^*$ and $j \neq j^*$ are in the cycle between some $x_\ell$ and $y_\ell$. All vertices in $A_{i^*}$ and $B_{j^*}$ are visited since $P$ was a Hamiltonian path on these vertices.
\end{proof}

The number of vertices of $G'$ is $3(m+n)\sqrt{t}+3 \cdot 2(\sqrt{t}-1) + 3 = \Oh(\sqrt{t} \cdot (m+n)) = \Oh(\sqrt{t} \cdot \max|X_{i,j}|)$. By with Lemma \ref{lem:Hamiltonian_cycle} the construction is a degree-$2$ cross-composition from \problem{Hamiltonian $s-t$-paths in Bipartite graphs} to \problem{Directed Hamiltonian cycle} parameterized by the number of vertices, proving the generalized kernel lower bound for the directed problem. Karp \cite{Karp72} gave a polynomial-time reduction that, given an $n$-vertex directed graph $G$, produces an undirected graph $G'$ with $3n$ vertices such that $G$ has a directed Hamiltonian cycle if and only if $G'$ has a Hamiltonian cycle. This is a linear parameter transformation  from \problem{Directed Hamiltonian cycle} to \problem{Hamiltonian cycle}. Since linear-parameter transformations transfer lower bounds~\cite{BodlaenderJK14,BodlaenderTY11}, we conclude that \problem{(Directed) Hamiltonian cycle} does not have a generalized kernel of size $\Oh(n^{2-\varepsilon})$ for any $\varepsilon > 0$.
\end{proof} 

\section{Dominating set}
\label{sec:dominating_set}
In this section we discuss the \textsc{Dominating Set} problem and its variants. Dom \emph{et al.}~\cite{Dom2014Incompressability} proved several kernelization lower bounds for the variant \textsc{Red-Blue Dominating Set}, which is the variant on bipartite (red/blue colored) graphs in which the goal is to dominate all the blue vertices by selecting a small subset of red vertices. Using ideas from their kernel lower bounds for the parameterization by either the number of red or the number of blue vertices, we prove sparsification lower bounds for \textsc{(Connected) Dominating Set}. Since we parameterize by the number of vertices, the same lower bounds apply to the dual problems \textsc{Nonblocker} and \textsc{Max Leaf Spanning Tree}.

We will prove these sparsification lower bounds using a degree-$2$ cross-composition, starting  from a variation of the \problem{Colored Red-Blue Dominating Set} problem (\problem{Col-RBDS}) as described by Dom \etal\ in \cite[page 13:6]{Dom2014Incompressability}.

\defproblem{\problem{Equal-Sized Colored Red/Blue Dominating Set (Eq-Col-RBDS)}}{A bipartite graph $G = (R \cup B,E)$, where  $R$ is partitioned into $k$ subsets  $R_1, \ldots, R_k$, such that $|R_1| = |R_2| = \ldots = |R_k|$.}{Is there a set $S \subseteq R$ such that for each $i \in [k]$ the set $S$ contains exactly one vertex of $R_i$ and every vertex in $B$ is adjacent to at least one vertex from $S$.}

We will think of the vertices in set $R_i$ as having color $i$. Hence the question is whether there is a set $S \subseteq R$ containing exactly one vertex of each color, such that every vertex in $B$ is adjacent to at least one vertex in $S$.

\begin{mylem}
\label{lem:col_RBDS_NP}
\problem{eq-Col-RBDS} is NP-complete.
\end{mylem}
\begin{proof}
 Dom \etal\ \cite{Dom2014Incompressability}  proved the NP-completeness of Colored RBDS without the constraint that all color sets have equal size. The NP-completeness for the equal-sized version follows from the fact that we may repeatedly add isolated vertices to classes $R_i$ that are too small, without changing the answer.
\end{proof}

Using this result, we can now give a degree-$2$ cross-composition and prove the following.

\begin{mythm}
\label{thm:dominating_set}
\problem{(Connected) Dominating Set}, \problem{Nonblocker}, and \problem{Max Leaf Spanning Tree} parameterized by the number of vertices $n$ do not have a generalized kernel of size $\Oh(n^{2-\varepsilon})$ for any $\varepsilon > 0$, unless \containment.
\end{mythm}

\begin{proof}
A graph has a nonblocker of size $k$ if and only if it has a dominating set of size $n-k$. Furthermore,
the \problem{Maximum Leaf Spanning Tree} problem is strongly related to \problem{Connected Dominating Set}.
The internal vertices of any spanning tree form a connected dominating set. Conversely, any connected dominating set contains a subtree spanning the dominating set, which -- by the domination property -- can be greedily extended to a spanning tree for the entire graph in which the remaining vertices are leaves. Hence a graph has a connected dominating set of size at most $k$ if and only if it has a spanning tree with at least $n-k$ leaves. Therefore we will show this result for \problem{(Connected) Dominating Set} only.

Define a polynomial equivalence relation $\eqvr$ on instances of \problem{eq-Col-RBDS} by first of all letting all instances where there is a vertex in $B$ of degree $0$ be in the same class, note that these are always no-instances.
Let $2$ instances $(G = (R\cup B),k)$ and $(G' = (R'\cup B'),k')$ of \problem{eq-Col-RBDS} be equivalent if $|R| = |R'|$ , $|B| = |B'|$ and $k = k'$. It is easy to see that $\eqvr$ indeed is a polynomial equivalence relation. 

Suppose we are given $t$ instances of \problem{eq-Col-RBDS}, such that $\sqrt{t}$ and $\log{\sqrt{t}} \in \mathbb{N}$ and such that all given instances are in the same equivalence class of $\eqvr$. Let $\q := \sqrt{t}$. If these instances are from the class where $B$ contains a vertex of degree $0$, output a constant size no-instance.

Otherwise, label the given instances  as $X_{i,j}$ with $i,j \in [\q]$. Let instance $X_{i,j}$ have graph $G_{i,j}$, which is bipartite with vertex set $R^*_{i,j} \cup B^*_{i,j}$. Let $|R^*_{i,j}| = m$ and $|B^*_{i,j}| = n$ and let $R^*_{i,j}$ be partitioned into $k$ color classes $R^{*p}_{i,j}$ for all $i,j \in [{\q}]$ and $p \in [k]$. Label all vertices in $R^{*p}_{i,j}$ as $r^*_{p,q}(i,j)$ with $p \in [k]$ and $q \in [m/k]$, which means that this vertex is the $q$'th vertex of color $p$ from instance $X_{i,j}$. Label vertices in $B^*_{i,j}$ as $b^*_1(i,j),\ldots,b^*_n(i,j)$ arbitrarily.
We now create an instance $(G,k)$ for \problem{Dominating Set} using the following steps.  A sketch of $G$ can be found in Figure \ref{fig:dominating_set}.
\begin{figure}
\begin{center}
\includegraphics{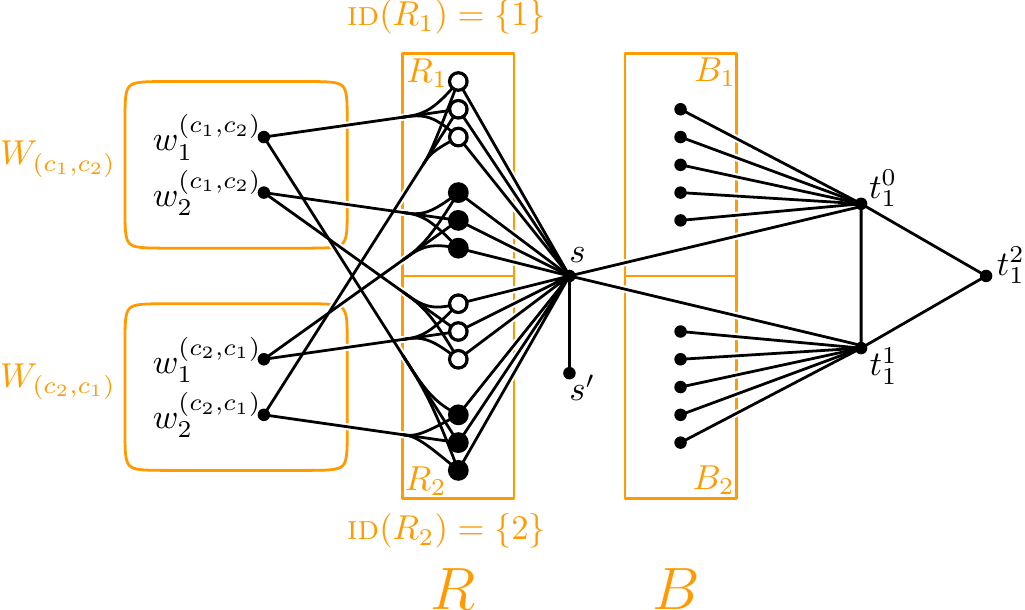}
\end{center}
\caption[Graph $G$ constructed in the proof of Theorem \ref{thm:dominating_set} for \problem{(Connected) Dominating Set}.]{A sketch of $G$, where $t'=2$, $m=6$ , $n=5$ and $k=2$. Thereby $K$ should be $5$ and $W_{(c_1,c_2)}$ should contain $10$ vertices. In this example we show the constructed graph when choosing $K=1$ for simplicity. We use the two colors $c_1$ and $c_2$, corresponding to white and black in the figure. Edges from $R$ to $B$ are left out for simplicity.}
\label{fig:dominating_set}
\end{figure}

\begin{enumerate}
\item \label{ds:step:A} Add vertices $r_{p,q}^i$ for $p \in [k], q \in [m/k]$ and $i \in [\q]$. The dominating set problem does not use colored instances, however we will remember the color of these vertices for simplicity. Let vertex $r_{p,q}^i$ have color $p$, for $i  \in [\q]$, $q \in [m/k]$ and $p \in [k]$.
Define $R_i := \{r_{p,q}^i \mid p \in [k], q\in[m/k]\}$ and let $R := \bigcup_{i\in [\q]} R_i$.
%
Give every set $R_i$ a unique identifier $\ID(R_i)$, which is a subset of $K := 2+k+\log \q$ numbers in the range $[2K]$.
\item \label{ds:step:B} Add vertices $b_\ell^j$ for $\ell \in [n]$ and $j \in[\q]$. Define $B_j$ and $B$ as $B_j := \{b_\ell^j \mid \ell \in [n]\}$ and $B := \bigcup_{j \in [\q]} B_j$.
\item \label{ds:step:connectAB} Add edges between the vertices $r_{p,q}^i$ and $b_\ell^j$ for $p \in [k], q \in [m/k]$ and $i,j \in [\q]$ if $r^*_{p,q}(i,j)$ is connected to $b^*_\ell(i,j)$ in instance $X_{i,j}$.
This ensures that the graph induced by $R_i \cup B_j$ is exactly $G_{ij,}$ and the coloring of vertices in $R_i$ matches the coloring of $R^*_{i,j}$.
\item \label{ds:step:s} Add vertices $s'$ and $s$ and edge $\{s',s\}$. Furthermore, add edges between $s$ and all vertices in $R$. The degree-$1$ vertex $s'$ ensures there is a minimum dominating set containing $s$, which covers all vertices in $R$ ``for free''.
\item \label{ds:step:W} In a similar way as given by Dom \etal\ in \cite{Dom2014Incompressability}, for every pair of colors $(c_1,c_2) \in \{1,\ldots,k\}\times\{1,\ldots,k\}$ with $c_1 \neq c_2$ we add a vertex set $W_{(c_1,c_2)}=\{w_1^{(c_1,c_2)},\ldots,w_{2K}^{(c_1,c_2)}\}$.
%
For $x \in [2K]$ connect $w_x^{(c_1,c_2)}$ to all vertices of color $c_1$ in $R_i$ if  $x \in \ID(R_i)$, otherwise connect $w_x^{(c_1,c_2)}$ to all vertices of color $c_2$ in $R_i$. This construction is used to choose which $R_i$ is part of a solvable input instance $X_{ij}$ for some $j \in [\q]$. This idea is formalized in Lemmas \ref{lem:one_from_each_color_from_A} and \ref{lem:all_in_one_square}.
\item \label{ds:step:T} Then, add $\log{\q}$ triangles, with vertices $\{t_\ell^0, t_\ell^1, t_\ell^2\}$ for $\ell \in[ \log{\q}]$. Connect $t_\ell^0$ to all vertices in $B_j$ if the $\ell$'th bit of $j$ equals $0$, connect $t_\ell^1$ to all vertices in $B_j$ if the $\ell$'th bit of $j$ equals $1$. Define $T$ to be the union of all these triangles. By choosing exactly one of the vertices $t_\ell^0$ or $t_\ell^1$ in a dominating set for each $\ell$, all groups $B_j$ except one are dominated automatically. The non-dominated one should then be part of a solvable input instance.
\item \label{ds:step:connected} Finally, add the edges $\{\{s,t_\ell^i\} \mid \ell \in [\log \q], i \in \{0,1\}\}$. This step ensures that every vertex in $T$ that is contained in the dominating set has $s$ as a neighbor in the dominating set, which implies that there is always a minimum dominating set that is connected. 
\end{enumerate}

We now make the following observations.

\begin{mylem}
\label{lem:no_from_B}
If $G$ has a dominating set $D$, then it also has a dominating set $D'$ of size at most $|D|$ that does not contain any vertices from $B$.
\end{mylem}
\begin{proof}
Suppose we are given a minimum dominating set $D$ of $G$, where vertex $v \in B$ is present. In any dominating set, $s$ or $s'$ must be present. If $s'$ is present and $s$ is not, we replace $s'$ by vertex $s$, and still obtain a valid dominating set of the same size. As such, all vertices in $R$ are now dominated by $s$. Vertices $t^0_\ell$ and $t^1_\ell$ with $\ell \in [\log{\q}]$ are dominated by $s$.  Since $t^2_\ell$ only has neighbors $t^1_\ell$ and $t^0_\ell$, at least one of these three vertices is present in $D$ for every $\ell \in [\log \q]$, hereby every vertex in $T$ has a neighbor in $D$.

Since $B$ is an independent set in $G$, the vertex $v$ does not dominate other vertices in $B$. Since the polynomial equivalence relation ensures that there are no isolated vertices in $B$, vertex $v$ has at least one neighbor $u$ in $R$. We can safely replace $v$ by $u$ to obtain a valid dominating set that has the same size as $D$ and does not contain any vertices from $B$.
\end{proof}

\begin{mylem}
\label{lem:at_most_k_from_A}
Any dominating set of $G$ of size at most $ k+1+\log \q$ contains at least $1+\log{\q}$ vertices from $\{s,s'\} \cup \{t^0_\ell,t^1_\ell,t^2_\ell \mid \ell \in [\log \q]\}$ and thus contains at most $k$ vertices from $R$.
\end{mylem}
\begin{proof}
In a dominating set $D$ of $G$, at least $\log  \q$ vertices are needed from $T$, since $t^2_\ell$ only has neighbors $t^1_\ell$ and $t^0_\ell$, so one of these vertices must be in $D$ for each $\ell \in [\log \q]$. Furthermore at least one of the vertices $s'$ or $s$ must be present, therefore there are $1+\log{\q}$ vertices in the set that are not from $R$.
\end{proof}

\begin{mylem}
\label{lem:one_from_each_color_from_A}
Any dominating set of $G$ of size at most $k+1+\log \q$ uses exactly one vertex of each color from $R$.
\end{mylem}
\begin{proof}
Suppose a dominating set of $G$ of size at most $k+1+\log \q$ uses less than $k$ colors from $R$. If at most $k-2$ colors are used, there must be two colors $c_1$ and $c_2$ that are not present in the set. However, this implies that all $2K$ vertices in $W^{(c_1,c_2)}$ are not dominated by vertices in $R$ and must therefore be in the set. This contradicts the maximum size of the dominating set, since $K = k + 2 + \log \q$. So, we are left with the possibility of using $k-1$ colors. Consider some color $c_1$ that was not used. Look at another color $c_2$ that is used exactly once, such a color exists by Lemma \ref{lem:at_most_k_from_A}. Suppose the vertex of color $c_2$ in the dominating set was from set $R_i$ for some $i \in [{\q}]$.
Then for any $x \in \ID(R_i)$ we have that $w_x^{(c_1,c_2)}$ is not connected to any vertex in the dominating set and therefore must be in the dominating set itself. Since $\ID(R_i)$ contains $K$ numbers, there are $K$ vertices that are not dominated by $R$, which contradicts the maximum size of the dominating set.
\end{proof}


\begin{mylem}
\label{lem:all_in_one_square}
For any dominating set $D$ of $G$ of size at most $k+1+\log \q$, there exists $i \in [{\q}]$ such that all vertices in $D \cap R$ are contained in set $R_i$.
\end{mylem}
\begin{proof}
Suppose there exists two vertices $u,v \in D$ such that $u \in R_i$ and $v \in R_j$ for some $i \neq j$. By Lemma \ref{lem:one_from_each_color_from_A}, $u$ and $v$ have different colors. Suppose $u$ has color $c_u$ and $v$ has color $c_v$. Since $R_i \neq R_j$, there exists $x \in [2K]$ such that $x \in \ID(R_i)$ and $x \notin \ID(R_j)$. By Step \ref{ds:step:W} of the construction, this means that none of the neighbors of vertex $w_x^{(c_u, c_v)}$ are contained in the dominating set. However, this vertex is not in $D$ and therefore $D$ is not a dominating set of $G$, which is a contradiction.
\end{proof}

Using the previous Lemmas, we obtain:

\begin{mylem}
\label{lem:dominating_set_lr}
\begin{enumerate}
\item If there is an input $X_{i^*,j^*}$ that has a col-RBDS of size $k$, then $G'$ has a connected dominating set of size $k+1+\log{\q}$.
\item If $G'$ has a (not necessarily connected) dominating set of size $k+1+\log \q$, then some input $X_{i^*,j^*}$ has a col-RBDS of size $k$.
\end{enumerate}
\end{mylem}
\begin{proof}
$(1)$\quad Let $X_{i^*,j^*}$ have a colored RBDS $D$ of size at most $k$, then we can construct a dominating set $D'$ of $G$ in the following way. For any vertex $r^*_{p,q}$ in $D$, add vertex $r^i_{p,q}$ to $D'$.

Furthermore add the vertex $s$ to $D'$. Then add vertex $t_\ell^0$ to $D'$ if the $q$'th bit of $j^*$ is $1$, add vertex $t_\ell^1$ otherwise. Now $s'$ is dominated and all vertices in $R$ have neighbor $s$ in $D'$. All vertices in $B_{j^*}$ are covered by the vertices in the dominating set from $R_{i^*}$, since $D$ was a col-RBDS of $X_{i^*,j^*}$. All vertices in $B_j$ for $j \neq j^*$ have neighbor $t_\ell^0$ or $t_\ell^1$ in $D'$ for some $\ell \in [\log{\q}]$, since the bit representation of $j$ must differ from the one of $j^*$ at some position. It now follows from Step \ref{ds:step:T} of the construction that all vertices in $B_j$ are connected to a vertex in the dominating set.
It remains to verify that all vertices in $W$ have a neighbor in $D'$. Consider $w_x^{(c_1,c_2)}$ for $x \in [2K]$ and $c_1,c_2\in [k]$. If $x \in \ID(R_{i^*})$, then this vertex is connected to all vertices of color $c_1$ and exactly  one of them is contained in $D'$. If $x \notin \ID(R_{i^*})$, the vertex $w_x^{(c_1,c_2)}$ is connected to all vertices of color $c_2$ in $R_{i^*}$ and again one vertex of this color in $R_{i^*}$ is contained in $D'$. So $D'$ is a dominating set of $G$ and it is easy to verify that $|D'|=k + 1 + \log \q$. Furthermore, $D'$ is constructed in such a way that it is connected. We can show this by proving that every vertex in $D'$ is a neighbor of $s$, since we chose $s$ in $D$. Vertices in $D'\cap R$ and $D'\cap T$ are neighbors of $s$, by Steps \ref{ds:step:s} and \ref{ds:step:connected} of the construction of $G$. The vertex $s'$ and vertices from $W$ and $B$ are not contained in $D'$. Thus, $D'$ is a connected dominating set.

\noindent $(2)$ \quad Let $D'$ be a dominating set of $G$ of size at most $k + 1 + \log \q$. Using Lemma \ref{lem:no_from_B} we modify $D'$ such that it chooses no vertices from $B$, without increasing its size. By Lemma \ref{lem:one_from_each_color_from_A} and \ref{lem:all_in_one_square}, $D'$ contains exactly $k$ vertices from $R$, all from the same $R_{i^*}$ for some $i^*$ and all of different color. $D'$ has size at most $k + 1 + \log t$ of which $k$ are contained in $R$ and one in $\{s,s'\}$. Combined with the fact that for any $\ell \in [\log \q]$ vertex $t^2_\ell$ has $t^1_\ell$ and $t^0_\ell$ as its only two neighbors, it follows that exactly one of these three vertices is contained in $D'$ for all $\ell$. Therefore $D'$ contains at most one of the vertices $t_\ell^0$ or $t_\ell^1$ for every $\ell \in [\log \q]$.

We can now define $x_\ell \in \{0,1\}$ for $\ell \in [\log{\q}]$, such that $t_\ell^{x_\ell} \notin D'$ for all $\ell \in [\log{\q}]$.
Consider the index $j^* \in [t]$ given by the binary representation $[x_1\,x_2\,\ldots\,x_{\log{\q}}]_2$. 
  It follows from the bit representation of $j^*$ that the vertices in $B_{j^*}$ are not connected to any of the vertices in $D'\cap T$. Since vertices in $B_{j^*}$ are only adjacent to vertices in $R$ and vertices of $T$,   it follows that every vertex in $B_{j^*}$ has a neighbor in $R$ that is in  $D'$. This implies that every vertex in $B_{j^*}$ has a neighbor in $D'\cap R_{i^*}$. Since $G[R_{i^*} \cup B_{j^*}]$ is isomorphic to the graph of instance $X_{i^*,j^*}$, it follows that $X_{i^*,j^*}$ has a col-RBDS of size at most $k$, which are exactly the vertices in $D'\cap R_{i^*}$.
\end{proof}

Given $t$ instances, the graph $G$ constructed above has $n \cdot {\q} + m \cdot {\q} + 2 + 3 \cdot \log  \q + 2\binom{k}{2}\cdot 2K = \Oh(\sqrt{t} \max |X_{i,j}|^2)$ vertices. It is straightforward to construct $G$ in polynomial time. It follows from Lemma \ref{lem:dominating_set_lr}  that $G$ has a dominating set of size $k + 1 + \log \q$, if and only if one of the input instances has a col-RBDS of size $k$. Furthermore, $G$ has a connected dominating set of size $k+1+\log n$ if and only if one of the input instances has a col-RBDS of size $k$.
Therefore we have given a degree-$2$ cross-composition to \problem{(Connected) Dominating Set}. Using Theorem \ref{thm:cross_composition_LB} it follows that \problem{Dominating Set} and \problem{Connected Dominating Set} do not have a generalized kernel of size $\Oh(n^{2-\varepsilon})$ for any $\varepsilon > 0$, unless \containment.
\end{proof}

Just as the sparsification lower bounds for \textsc{Vertex Cover} that were presented by Dell and van Melkebeek~\cite{DellM14} had implications for the parameterization by the solution size~$k$, Theorem~\ref{thm:dominating_set} has implications for the kernelization complexity of \textsc{$k$-Nonblocker} and \textsc{$k$-Max Leaf}. Since the solution size~$k$ never exceeds the number of vertices in this problem, a kernel with~$\Oh(k^{2-\epsilon})$ edges would give a nontrivial sparsification, contradicting Theorem~\ref{thm:dominating_set}. Hence our results show that the existing linear-vertex kernels for \textsc{$k$-Nonblocker}~\cite[Corollary 4]{Wiedermann2006nonblocker} and \textsc{$k$-Max Leaf}~\cite{Estivill-CastroFLR05} cannot be improved to~$\Oh(k^{2-\varepsilon})$ edges unless \containment.

\section{\texorpdfstring{$d$-Hypergraph $2$-Colorability and $d$-NAE-SAT}{d-Hypergraph 2-Colorability and d-NAE-SAT}}
\label{sec:set_splitting}
The goal of this section is to give a nontrivial sparsification algorithm for \textsc{nae-sat} and prove a matching lower bound. For ease of presentation, we start by analyzing the closely related hypergraph $2$-colorability problem. Recall that a hypergraph consists of a vertex set~$V$ and a set~$E$ of \emph{hyperedges}; each hyperedge~$e \in E$ is a subset of~$V$. A $2$-coloring of a hypergraph is a function~$c \colon V \to \{1,2\}$; such a coloring is \emph{proper} if there is no hyperedge whose vertices all obtain the same color. We will use \textsc{$d$-Hypergraph $2$-Colorability} to refer to the setting where hyperedges have size at most~$d$. The corresponding decision problem asks, given a hypergraph, whether it is $2$-colorable.

\begin{mythm}\label{thm:set_splitting_kernel}
\problem{$d$-Hypergraph $2$-Colorability} parameterized by the number of vertices~$n$ has a kernel with $2 \cdot n^{d-1}$ hyperedges that can be encoded in $\Oh(n^{d-1} \cdot d \cdot \log n)$ bits.
\end{mythm}
\begin{proof}
Suppose we are given a hypergraph with vertex set~$V$ and hyperedges~$E$, where each hyperedge contains at most $d$ vertices. We show how to reduce the number of hyperedges without changing the $2$-colorability status. Let $E_r \subseteq E$ denote the set of edges in $E$ that contain exactly $r$ vertices. For each $E_r$ we construct a set $E'_r \subseteq E_r$ of \emph{representative hyperedges}. Enumerate the edges in $E_r$ as $e^r_1,\ldots,e^r_k$. We construct a $(0,1)$-matrix $M_r$ with $N:=\binom{n}{r-1}$ rows and $k$ columns. Consider all possible subsets $A_1,\ldots,A_N$ of size $r-1$ of the set of vertices $V$. Define the elements $m_{i,j}$ for $i\in N$ and $j \in k$ of $M_r$ as follows.
\[ m_{i,j} := \left\{ \begin{array}{ll}
         1 & \mbox{if $A_i \subseteq e^r_j$};\\
        0 &  \mbox{otherwise.} \end{array} \right. \]
Using Gaussian elimination, compute a basis $B$ of the columns of this matrix, which is a subset of the columns that span the column space of $M_r$. Let $E'_r$ contain edge $e^r_i$ if the $i$'th column of $M_r$ is contained in $B$, and define $E' := \bigcup_{r \in [d]} E'_r$, which forms the kernel. Using a lemma due to Lov\'{a}sz~\cite{lovasz}, we can prove that~$E'$ preserves the $2$-colorability status.

\begin{mylem}[\cite{lovasz}]
\label{lem:d_set_splitting_matrix_lemma}
Let $H$ be an $r$-uniform hypergraph with edges $E_1,\ldots,E_m$. Let $\alpha_1,\ldots,\alpha_m$ be real numbers such that for every $(r-1)$-element subset $A$ of $V(H)$,
\[\sum_{E_i \supset A} \alpha_i = 0.\]
Then for every partition $\{V_1,V_2\}$ of $V(H)$ the following holds:
\[\sum_{E_i\subseteq V_1} \alpha_i = (-1)^r \sum_{E_i \subseteq V_2} \alpha_i.\]
\end{mylem}

Now we can prove the correctness of the presented kernel. 

\begin{mylem}
\label{lem:dss:kernel_valid}
$(V,E)$ has a proper $2$-coloring $\Leftrightarrow$ $(V,E')$ has a proper $2$-coloring.
\end{mylem}
\begin{proof}
($\Rightarrow$) Clearly, if $(V,E)$ has a proper $2$-coloring, then the same coloring is proper for the subhypergraph~$(V,E')$ since~$E' \subseteq E$.

($\Leftarrow$) Now suppose $(V,E')$ has a proper $2$-coloring. We will show that for each $r \in [d]$, no edge of~$E_r$ is monochromatic under this coloring. All hyperedges contained in $E'_r$ are $2$-colored by definition. Suppose there exists $r \in [d]$, such that $E_r$ contains a monochromatic hyperedge. Let~$E_r = e^r_1,\ldots,e^r_k$ and let $e_{i^*}$ be a hyperedge in~$E_r$ whose vertices all receive the same color.

By reordering the matrix~$M_r$, we may assume that the basis $B$ of $M_r$ contains the first $\ell$ columns, thus $i^* > \ell$. Let $\vect{m}_i$ denote the $i$'th column of $M_r$.  Since $\vect{m}_{i^*}$ is not contained in the basis, there exist coefficients $\alpha_1,\ldots,\alpha_\ell$ such that
\[\sum_{i=1}^\ell \alpha_i\cdot \vect{m}_i = \vect{m}_{i^*}. \]
For $i \in [k]$, define:
\[ \beta_i := \left\{ \begin{array}{lll}
         \alpha_i & \mbox{if $i \leq \ell$};\\
        -1 &  \mbox{if $i=i^*$};\\
				 0 &  \mbox{otherwise.} \end{array} \right. \]
From this definition of $\beta$ it follows that
\begin{align*}
&\sum_{i=1}^{k} \beta_i\cdot\vect{m}_i = \sum_{i=1}^\ell \alpha_i\cdot \vect{m}_i - \vect{m}_{i^*}= \vect{0}. \\
\intertext{Let $A_j$ be any size $(r-1)$-subset of $V$. Since $m_{i,j} = 1$ exactly when $e_i \supseteq A_j$, and $0$ otherwise, we have:}
&\sum_{e_i \supset A_j} \beta_i = \sum_{i=1}^{k} \beta_i m_{i,j} = 0.
\end{align*}
By Lemma \ref{lem:d_set_splitting_matrix_lemma} we obtain that for any partitioning $V_1 \cup V_2$ of the vertices in $V$,
\begin{align}
\label{dss:formula}
\sum_{e_i\subseteq V_1} \beta_i = (-1)^r \sum_{e_i \subseteq V_2} \beta_i.
\end{align}
Consider however the partitioning $(V_1, V_2)$ given by the $2$-coloring of the vertices. Then every edge $e_i \in E'_r$ contains at least one vertex of each color and is thereby not fully contained in $V_1$ or $V_2$. As such, these edges contribute $0$ to both sides of the equation. The edge $e_{i^*}$ is the only remaining edge with a non-zero coefficient and by assumption, it is contained entirely within one color class. Without loss of generality, let $e_{i^*} \subseteq V_1$. But then $\sum_{e_i\subseteq V_1} \beta_i = -1$ while $(-1)^r \sum_{e_i \subseteq V_2} \beta_i = 0$, which contradicts (\ref{dss:formula}).
\end{proof}

To bound the size of the kernel, consider the matrix~$M_r$ for $r \in[d]$. Its rank is bounded by the minimum of its number of rows and columns, which is at most $\binom{n}{r-1} \leq n^{r-1}$. As such, we get $|E'_r| \leq \mathrm{rank}(M_r) \leq n^{r-1}$. Note that $d \leq n$, such that $|E'| \leq \sum_{r=1}^d n^{r-1} = n^{d-1} +  \sum_{r=1}^{d-1} n^{r-1} \leq 2\cdot n^{d-1}$.
So $E'$ contains at most $2n^{d-1}$ hyperedges. Since a hyperedge consists of at most $d$ vertices, the kernel can be encoded in $\Oh(n^{d-1}\cdot d \cdot \log{n})$ bits.
\end{proof}

By a folklore reduction, Theorem~\ref{thm:set_splitting_kernel} gives a sparsification for \textsc{nae-sat}. Consider an instance of \textsc{$d$-nae-sat}, which is a conjunction of clauses of size at most~$d$ over variables~$x_1, \ldots, x_n$. The formula gives rise to a hypergraph on vertex set~$\{x_i, \neg x_i \mid i \in [n]\}$ containing one hyperedge per clause, whose vertices correspond to the literals in the clause. When additionally adding~$n$ hyperedges~$\{x_i, \neg x_i\}$ for~$i \in [n]$, it is easy to see that the resulting hypergraph is $2$-colorable if and only if there is a NAE-satisfying assignment to the formula. The maximum size of a hyperedge matches the maximum size of a clause and the number of created vertices is twice the number of variables. We can therefore sparsify an $n$-variable instance of \textsc{$d$-nae-sat} in the following way: reduce it to a $d$-hypergraph with~$n' := 2n$ vertices and apply the kernelization algorithm of Theorem~\ref{thm:set_splitting_kernel}. It is easy to verify that restricting the formula to the representative hyperedges in the kernel gives an equisatisfiable formula containing~$2 \cdot (n')^{d-1} \in \Oh(2^{d-1} n^{d-1})$ clauses, giving a sparsification for \textsc{nae-sat}. As mentioned in the introduction, the existence of a linear-parameter transformation~\cite[p.81]{JansenK13} from \textsc{$d$-cnf-sat} to \textsc{($d+1$)-nae-sat} also implies a sparsification \emph{lower bound} for \textsc{$d$-nae-sat}, using the results of Dell and van Melkebeek~\cite{DellM14}. Hence we obtain the following theorem.

\begin{theorem}
For every fixed~$d \geq 4$, the \textsc{$d$-nae-sat} problem parameterized by the number of variables~$n$ has a kernel with~$\Oh(n^{d-1})$ clauses that can be encoded in~$\Oh(n^{d-1} \cdot \log n)$ bits, but admits no generalized kernel of size~$\Oh(n^{d-1-\varepsilon})$ for~$\varepsilon > 0$ unless \containment.
\end{theorem}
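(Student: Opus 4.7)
The plan is to prove the two halves of the theorem separately, relying on Theorem~\ref{thm:set_splitting_kernel} for the upper bound and on the Dell--van Melkebeek lower bound for \textsc{$d$-cnf-sat} for the lower bound. For the upper bound, I would invoke the folklore reduction from \textsc{$d$-nae-sat} to \textsc{$d$-Hypergraph $2$-Colorability} described just before the theorem: given an $n$-variable instance with clauses of size at most $d$, construct the hypergraph on the $2n$ literal-vertices $\{x_i, \neg x_i \mid i \in [n]\}$ whose hyperedges are the clauses (with vertices being their literals) together with the $n$ size-$2$ hyperedges $\{x_i, \neg x_i\}$. A proper $2$-coloring then forces complementary literals to receive opposite colors, so $2$-colorability is equivalent to NAE-satisfiability. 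Feeding this hypergraph on $n' = 2n$ vertices into Theorem~\ref{thm:set_splitting_kernel} produces a representative set of at most $2 \cdot (2n)^{d-1} = \Oh(n^{d-1})$ hyperedges whose retention preserves $2$-colorability. Translating those representatives back into NAE-clauses (discarding any kept complementary-pair edges, which were present originally) yields a kernelized formula with $\Oh(n^{d-1})$ clauses; since each clause has size at most $d$, it is encodable in $\Oh(n^{d-1} \log n)$ bits for any fixed $d$.

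For the lower bound, I would appeal to the linear-parameter transformation from \textsc{$d$-cnf-sat} to \textsc{$(d+1)$-nae-sat} sketched in the introduction: given an $n$-variable $d$-cnf-sat instance, introduce a fresh variable $z$ and append the positive literal $z$ to every clause, producing an instance of \textsc{$(d+1)$-nae-sat} on $n+1$ variables. Correctness is a routine check: a satisfying assignment for the cnf-sat instance paired with $z := \false$ is NAE-satisfying (each enlarged clause contains a \true\ original literal and the \false\ literal $z$), and conversely any NAE-satisfying assignment can be normalized (by globally negating it if necessary) so that $z = \false$, after which the NAE-satisfaction of each clause forces some original literal to be \true. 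This transformation is polynomial-time and increases the parameter by only one, so it is linear-parameter; since linear-parameter transformations transfer generalized kernelization lower bounds, the Dell--van Melkebeek bound of $\Oh(n^{d-\varepsilon})$ for \textsc{$d$-cnf-sat} yields an $\Oh(n^{(d+1)-1-\varepsilon})$ lower bound for \textsc{$(d+1)$-nae-sat}. Reindexing $d \mapsto d-1$ gives exactly the claimed bound for \textsc{$d$-nae-sat}, valid for $d \geq 4$, under \ncontainment.

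The main obstacle is essentially bookkeeping rather than any new combinatorial insight. I need to verify that both reductions faithfully preserve equi-(NAE-)satisfiability -- namely the complementary-literal trick in the hypergraph direction and the global-negation normalization in the cnf-to-nae direction -- and that the $n \mapsto 2n$ blowup in the first reduction and the $n \mapsto n+1$ shift in the second are absorbed by the asymptotic notation. Both hold because $d$ is a fixed constant, so the factor $2^{d-1}$ in $2(2n)^{d-1}$ and the contribution of $d \log n$ to the per-clause encoding disappear into the hidden constant of the $\Oh$-notation. No further ideas beyond Theorem~\ref{thm:set_splitting_kernel} and the cited external results are required.
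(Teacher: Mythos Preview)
Your proposal is correct and follows essentially the same approach as the paper: the upper bound via the folklore literal-hypergraph reduction combined with Theorem~\ref{thm:set_splitting_kernel}, and the lower bound via the add-one-variable linear-parameter transformation from \textsc{$d$-cnf-sat} to \textsc{$(d+1)$-nae-sat} together with the Dell--van Melkebeek bound. The paper's own argument is exactly this two-paragraph sketch preceding the theorem, and your bookkeeping checks (the $2^{d-1}$ constant, the $n \mapsto n+1$ shift, the normalization by global negation) are all handled the same way there.
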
 

\section{Conclusion}
\label{sec:conclusion}
We have added several classic graph problems to a growing list of problems for which non-trivial polynomial-time sparsification is provably impossible under the assumption that \ncontainment. Our results for \textsc{(Connected) Dominating Set} proved that the linear-vertex kernels with~$\Theta(k^2)$ edges for \textsc{$k$-Nonblocker} and \textsc{$k$-Max Leaf Spanning Tree} cannot be improved to~$\Oh(k^{2-\varepsilon})$ edges unless \containment. 

The graph problems for which we proved sparsification lower bounds can be defined in terms of vertices: the \textsc{$4$-Coloring} problem asks for a partition of the vertex set into four independent sets, \textsc{Dominating Set} asks for a dominating subset of vertices, and \textsc{Hamiltonian Cycle} asks for a permutation of the vertices that forms a cycle. In contrast, not much is known concerning sparsification lower bounds for problems whose solution is an edge subset of possibly quadratic size. For example, no sparsification lower bounds are known for well-studied problems such as \textsc{Max Cut}, \textsc{Cluster Editing}, or \textsc{Feedback Arc Set in Tournaments}. Difficulties arise when attempting to mimic our lower bound constructions for such edge-based problems. Our constructions all embed~$t$ instances into a~$2 \times \sqrt{t}$ table, using each combination of a cell in the top row and bottom row to embed one input. For problems defined in terms of edge subsets, it becomes difficult to ``turn off'' the contribution of edges that are incident on vertices that do not belong to the two cells that correspond to a \yes-instance among the inputs to the \textsc{or}-construction. This could be interpreted as evidence that edge-based problems such as \textsc{Max Cut} might admit non-trivial polynomial sparsification. We have not been able to answer this question in either direction, and leave it as an open problem. For completeness, we point out that Karp's reduction~\cite{Karp72} from  \textsc{Vertex Cover} to \textsc{Feedback Arc Set} (which only doubles the number of vertices) implies, using existing bounds for \textsc{Vertex Cover}~\cite{DellM14}, that \textsc{Feedback Arc Set} does not have a compression of size~$\Oh(n^{2-\varepsilon})$ unless \containment.

Another problem whose compression remains elusive is \textsc{$3$-Coloring}. In several settings (cf.~\cite{FellowsJ14}), the optimal kernel size matches the size of minimal obstructions in a problem-specific partial order. This is the case for \textsc{$d$-nae-sat}, whose kernel with~$\Oh(n^{d-1})$ clauses matches the fact that critically 3-chromatic $d$-uniform hypergraphs have at most~$\Oh(n^{d-1})$ hyperedges. Following this line of reasoning, it is tempting to conjecture that \textsc{$3$-Coloring} does not admit subquadratic compressions: there are critically $4$-chromatic graphs with~$\Theta(n^2)$ edges~\cite{Toft70}.

The kernel we have given for \textsc{$d$-nae-sat} is one of the first examples of non-trivial polynomial-time sparsification for general structures that are not planar or similarly guaranteed to be sparse. Obtaining non-trivial sparsification algorithms for other problems is an interesting challenge for future work. Are there natural problems defined on general graphs that admit subquadratic sparsification?

\bibliographystyle{plain}
\bibliography{lipics_report}

\begin{thebibliography}{10}

\bibitem{BodlaenderDFH09}
Hans~L. Bodlaender, Rodney~G. Downey, Michael~R. Fellows, and Danny Hermelin.
\newblock On problems without polynomial kernels.
\newblock {\em J. Comput. Syst. Sci.}, 75(8):423--434, 2009.

\bibitem{BodlaenderJK13b}
Hans~L. Bodlaender, Bart M.~P. Jansen, and Stefan Kratsch.
\newblock Kernel bounds for path and cycle problems.
\newblock {\em Theor. Comput. Sci.}, 511:117--136, 2013.

\bibitem{BodlaenderJK14}
Hans~L. Bodlaender, Bart M.~P. Jansen, and Stefan Kratsch.
\newblock Kernelization lower bounds by cross-composition.
\newblock {\em {SIAM} J. Discrete Math.}, 28(1):277--305, 2014.

\bibitem{BodlaenderTY11}
Hans~L. Bodlaender, St{\'e}phan Thomass{\'e}, and Anders Yeo.
\newblock Kernel bounds for disjoint cycles and disjoint paths.
\newblock {\em Theor. Comput. Sci.}, 412(35):4570--4578, 2011.

\bibitem{CyganGH13}
Marek Cygan, Fabrizio Grandoni, and Danny Hermelin.
\newblock Tight kernel bounds for problems on graphs with small degeneracy.
\newblock In {\em Proc. 21st ESA}, pages 361--372, 2013.

\bibitem{Wiedermann2006nonblocker}
Frank K. H.~A. Dehne, Michael~R. Fellows, Henning Fernau, Elena Prieto, and
  Frances~A. Rosamond.
\newblock {NONBLOCKER:} parameterized algorithmics for minimum dominating set.
\newblock In {\em Proc. 32nd {SOFSEM}}, pages 237--245, 2006.

\bibitem{DellM12}
Holger Dell and D{\'a}niel Marx.
\newblock Kernelization of packing problems.
\newblock In {\em Proc. 23rd SODA}, pages 68--81, 2012.

\bibitem{DellM14}
Holger Dell and Dieter van Melkebeek.
\newblock Satisfiability allows no nontrivial sparsification unless the
  polynomial-time hierarchy collapses.
\newblock {\em J. {ACM}}, 61(4):23:1--23:27, 2014.

\bibitem{Dom2014Incompressability}
Michael Dom, Daniel Lokshtanov, and Saket Saurabh.
\newblock Kernelization lower bounds through colors and {ID}s.
\newblock {\em ACM Trans. Algorithms}, 11(2):13:1--13:20, 2014.

\bibitem{EppsteinGIN97}
David Eppstein, Zvi Galil, Giuseppe~F. Italiano, and Amnon Nissenzweig.
\newblock Sparsification - a technique for speeding up dynamic graph
  algorithms.
\newblock {\em J. {ACM}}, 44(5):669--696, 1997.

\bibitem{Estivill-CastroFLR05}
Vladimir Estivill-Castro, Michael Fellows, Michael Langston, and Frances
  Rosamond.
\newblock {FPT} is {P}-time extremal structure {I}.
\newblock In {\em Proc. 1st {ACiD}}, pages 1--41, 2005.

\bibitem{FellowsJ14}
Michael~R. Fellows and Bart M.~P. Jansen.
\newblock {FPT} is characterized by useful obstruction sets: Connecting
  algorithms, kernels, and quasi-orders.
\newblock {\em {TOCT}}, 6(4):16, 2014.

\bibitem{FortnowS11}
Lance Fortnow and Rahul Santhanam.
\newblock Infeasibility of instance compression and succinct {PCP}s for {NP}.
\newblock {\em J. Comput. Syst. Sci.}, 77(1):91--106, 2011.

\bibitem{NPComplete}
Michael~R. Garey and David~S. Johnson.
\newblock {\em Computers and Intractability}.
\newblock W.H. Freeman, 1979.

\bibitem{HermelinW12}
Danny Hermelin and Xi~Wu.
\newblock Weak compositions and their applications to polynomial lower bounds
  for kernelization.
\newblock In {\em Proc. 23rd SODA}, pages 104--113, 2012.

\bibitem{ImpagliazzoPZ01}
Russell Impagliazzo, Ramamohan Paturi, and Francis Zane.
\newblock Which problems have strongly exponential complexity?
\newblock {\em J. Comput. Syst. Sci.}, 63(4):512--530, 2001.

\bibitem{Jansen15}
Bart M.~P. Jansen.
\newblock On sparsification for computing treewidth.
\newblock {\em Algorithmica}, 71(3):605--635, 2015.

\bibitem{JansenK13}
Bart M.~P. Jansen and Stefan Kratsch.
\newblock Data reduction for graph coloring problems.
\newblock {\em Information and Computation}, 231:70--88, 2013.

\bibitem{Karp72}
Richard~M. Karp.
\newblock {Reducibility Among Combinatorial Problems}.
\newblock In {\em Complexity of Computer Computations}, pages 85--103. Plenum
  Press, 1972.

\bibitem{lovasz}
L\'asl\'o Lov\'asz.
\newblock Chromatic number of hypergraphs and linear algebra.
\newblock In {\em Studia Scientiarum Mathematicarum Hungarica 11}, pages
  113--114, 1976.

\bibitem{NemhauserT75}
George~L. Nemhauser and Leslie E.~Trotter Jr.
\newblock {Vertex packings: structural properties and algorithms.}
\newblock {\em Math. Program.}, 8:232--248, 1975.

\bibitem{Toft70}
Bjarne Toft.
\newblock On the maximal number of edges of critical $k$-chromatic graphs.
\newblock {\em Studia Scientiarum Mathematicarum Hungarica}, 5:461--470, 1970.

\end{thebibliography}

\newpage






\end{document}